\newtheorem{theorem}{Theorem}
\newtheorem{definition}[theorem]{Definition}
\newtheorem{example}[theorem]{Example}
\newtheorem{lemma}[theorem]{Lemma}
\newtheorem{proposition}[theorem]{Proposition}
\newtheorem{remark}[theorem]{Remark}
\newenvironment{proof}[1][Proof]{\textbf{#1.} }{\ \rule{0.5em}{0.5em}}
\title{The  Transfer Function of Generic Linear Quantum Stochastic Systems Has a Pure Cascade Realization\thanks{This research was supported by the Australian Research Council}}
\author{Hendra I. Nurdin,~Symeon Grivopoulos,~and~Ian~R.~Petersen\footnote{H. I. Nurdin is with the School of Electrical Engineering and Telecommunications, UNSW Australia, Sydney NSW 2052, Australia. Email: h.nurdin@unsw.edu.au. S. Grivopoulos and I. R. Petersen are with the School of Engineering and Information Technology, UNSW Canberra, Canberra BC 2610, Australia.}}
\begin{document}

\maketitle

\begin{abstract}
This paper establishes that generic linear quantum stochastic systems have a pure cascade realization of their transfer function, generalizing an earlier result established only for the special class of completely passive linear quantum stochastic systems. In particular, a cascade realization therefore  exists for generic active linear quantum stochastic systems  that require an external source of quanta to operate.
The results  facilitate a simplified realization of generic linear quantum stochastic systems for applications such as coherent feedback control and optical filtering. The key tools that are developed are algorithms for symplectic QR and Schur decompositions. It is shown that generic  real square matrices of even dimension can be transformed into a lower $2 \times 2$ block triangular form by a symplectic similarity transformation.  The linear algebraic results herein may be of independent interest for applications beyond the problem of transfer function realization for quantum systems. Numerical examples are included to illustrate the main results. In particular, one example  describes an equivalent realization of the transfer function of a nondegenerate parametric amplifier as the cascade interconnection of two degenerate parametric amplifiers with an additional outcoupling mirror.
\end{abstract}


\section{Introduction}
\label{sec:intro}

The class of linear quantum stochastic systems \cite{BE08,JNP06,NJD08,GJN10} represents multiple distinct open quantum harmonic oscillators that are coupled linearly to one another and also to external Gaussian fields, e.g., coherent laser beams, and whose dynamics can be conveniently and completely summarized in the Heisenberg picture of quantum mechanics in terms of a quartet of matrices $A,B,C,D$, analogous to those used in modern control theory for linear systems. As such, they can  be viewed as a quantum analogue of classical linear stochastic systems and are encountered in practice, for instance, as models for optical parametric amplifiers \cite[Chapters 7 and 10]{GZ04}. However, due to the constraints imposed by quantum mechanics, the matrices $A,B,C,D$ in a linear quantum stochastic system cannot be arbitrary, a restriction not encountered in the classical setting. In fact, as derived in \cite{JNP06}  for a certain fixed choice of $D$, it is required that $A$ and $B$ satisfy a certain non-linear equality constraint, and $B$ and $C$ satisfy a linear equality constraint. These constraints on $A,B,C,D$ are referred to as {\em physical realizability} constraints \cite{JNP06}.

A number of applications of linear quantum stochastic systems have been theoretically proposed or experimentally demonstrated in the literature. In particular, they can serve as coherent feedback controllers \cite{JNP06,NJP07b}, i.e., feedback controllers that are themselves quantum systems. In this context,  they have been shown to be theoretically effective for cooling of an optomechanical resonator \cite{HM12}, can modify the characteristics of squeezed light produced experimentally by an optical parametric oscillator  (OPO) \cite{CTSAM13}, and, in the setting of microwave superconducting circuits, a linear quantum stochastic system in the form of a Josephson parametric amplifier (JPA) operated in the linear regime has been experimentally demonstrated to be able to rapidly reshape the dynamics of a superconducting electromechanical circuit (EMC) \cite{KAKKCSL13}. Linear quantum stochastic systems can also be used as optical filters for various input signals, including non-Gaussian input signals like single photon and multi-photon states. As filters they can be used to modify the wavepacket shape of single and multi-photon sources \cite{ZJ13,Zhang14}. Also, linear quantum stochastic  systems can dissipatively generate Gaussian cluster states \cite{KY12} as an important component of continuous-variable one way quantum computers \cite{MLGWRN06}.

In certain quantum control problems, such as in  coherent feedback $H^{\infty}$ \cite{JNP06} and LQG \cite{NJP07b} control  problems, the latter being adapted for addressing an optomechanical cooling problem in \cite{HM12}, the important feature of the controller is its transfer function $T(s)=C(sI-A)^{-1}B+D$ rather than the system matrices $(A,B,C,D)$. Therefore, an important issue in the implementation of coherent feedback controllers is how to realize a controller with a certain transfer function from a bin of basic linear quantum (optical) devices. This is a special case of the problem of network synthesis of linear quantum stochastic systems addressed in \cite{NJD08,Nurd10a,Nurd10b}. In particular, it was shown in \cite{Nurd10b}, generalizing the results of \cite{Pet09,Pet12}, that the transfer function of all linear quantum stochastic systems which are completely passive can be realized by a cascade of one degree of freedom linear quantum stochastic systems. Completely passive here means that the system can be realized using only passive linear optical devices which do not need an external source of quanta for their operation. The question of whether cascade realizations exist for general linear quantum stochastic systems has remained an open problem. The contribution of this paper is to resolve this question by proving that, generically, linear quantum stochastic systems do possess a pure cascade realization. This is significant from a practical point of view, as it allows for a simpler realization of generic linear quantum stochastic systems.

The remainder of the paper is organized as follows. Section \ref{sec:preliminaries} introduces the notation and gives an overview of linear quantum stochastic systems and the associated realization theory. Section \ref{sec:sym-GS} presents a symplectic QR decomposition algorithm. The results of Section \ref{sec:sym-GS} form the basis for a symplectic Schur decomposition algorithm that is presented in Section \ref{sec:pure-cascading} and used to show that the transfer function of generic linear quantum stochastic systems can be realized by pure cascading. Finally, Section \ref{sec:conclu} summarizes the contributions of the paper.

\section{Preliminaries}
\label{sec:preliminaries}

\subsection{Notation}
We will use the following notation: $\imath=\sqrt{-1}$, $^*$ denotes the adjoint of a linear operator
as well as the conjugate of a complex number. If $A=[a_{jk}]$ then $A^{\#}=[a_{jk}^*]$, and $A^{\dag}=(A^{\#})^{\top}$, where $(\cdot)^{\top}$ denotes matrix transposition.  $\Re\{A\}=(A+A^{\#})/2$ and $\Im\{A\}=\frac{1}{2\imath}(A-A^{\#})$.
We denote the identity matrix by $I$ whenever its size can be
inferred from context and use $I_{n}$ to denote an $n \times n$
identity matrix. Similarly, $0_{m \times n}$ denotes  a $m \times n$ matrix with zero
entries but drop the subscript when its dimension  can be determined from context. We use
${\rm diag}(M_1,M_2,\ldots,M_n)$  to denote a block diagonal matrix with
square matrices $M_1,M_2,\ldots,M_n$ on its diagonal, and ${\rm diag}_{n}(M)$
denotes a block diagonal matrix with the
square matrix $M$ appearing on its diagonal blocks $n$ times. Also, we
will let $\mathbb{J}=\left[\begin{array}{rr}0 & 1\\-1&0\end{array}\right]$ and $\mathbb{J}_n=I_{n} \otimes \mathbb{J}={\rm diag}_n(\mathbb{J})$.

\subsection{The class of linear quantum stochastic systems}
\label{sec:linear-summary}
Let $x=(q_1,p_1,q_2,p_2,\ldots,q_n,p_n)^T$ denote a vector of  the canonical position and
momentum operators of a {\em many degrees of freedom quantum
harmonic oscillator} satisfying the canonical commutation
relations (CCR)  $[q_i,p_j]=2 \imath \delta_{ij}$, $[q_i,q_j]=0$, and $[p_i,p_j]=0$ for $i,j=1,2,\ldots,n$, or, more compactly, $xx^T-(xx^T)^T=2\imath \mathbb{J}_n$. A {\em linear quantum stochastic
system} \cite{JNP06,NJP07b,NJD08} $G$ is a quantum system defined by three {\em parameters}:
(i) A quadratic Hamiltonian $H=\frac{1}{2} x^T R x$ with $R=R^T \in
\mathbb{R}^{2n \times 2n}$, (ii) a coupling operator $L=Kx$, where $K$ is
an $m \times 2n$ complex matrix, and (iii) a unitary $m \times m$
scattering matrix $S$. For shorthand, we write $G=(S,L,H)$ or $G=(S,Kx,\frac{1}{2} x^TRx)$. The
time evolution $x(t)$ of $x$ in the Heisenberg picture ($ t
\geq 0$) is given by the quantum stochastic differential equation
(QSDE) (see \cite{BE08,JNP06,NJD08}):
\begin{eqnarray*}
dx(t) &=&  A_0x(t)dt+ B_0\left[ \begin{array}{c} d\mathcal{A}(t)
\\ d\mathcal{A}(t)^{\#} \end{array}\right];
x(0)=x, \notag\\
dY(t) &=&  C_0 x(t)dt+  D_0 d\mathcal{A}(t), \label{eq:qsde-out}
\end{eqnarray*}
with $A_0=2\mathbb{J}_n(R+\Im\{K^{\dag}K\})$, $B_0=2\imath \mathbb{J}_n [\begin{array}{cc}
-K^{\dag}S & K^TS^{\#}\end{array}]$,
$C_0=K$, and $D_0=S$. Here $Y(t)=(Y_1(t),\ldots,Y_m(t))^{\top}$ is a vector of
continuous-mode bosonic {\em output fields} that results from the interaction of the quantum
harmonic oscillators and the incoming continuous-mode bosonic quantum fields in
the $m$-dimensional vector $\mathcal{A}(t)$. Note that the dynamics of $x(t)$ is linear, and
$Y(t)$ depends linearly on $x(s)$, $0 \leq s \leq t$. We refer to $n$ as the {\em
degrees of freedom} of the system or, more simply, the {\em degree} of the system.

Following \cite{JNP06}, it will be  convenient to write the dynamics in quadrature form as
\begin{align}
dx(t)&=Ax(t)dt+Bdw(t);\, x(0)=x. \nonumber\\
dy(t)&= C x(t)dt+ D dw(t), \label{eq:qsde-out-quad}
\end{align}
with $w(t) =2(\Re\{\mathcal{A}_1(t)\},\Im\{\mathcal{A}_1(t)\},\ldots,\Re\{\mathcal{A}_m(t)\},$ $\Im\{\mathcal{A}_m(t)\})^{\top}$ and
$y(t) = 2(\Re\{Y_1(t)\},$ $\Im\{Y_1(t)\},\ldots,$ $\Re\{Y_m(t)\},\Im\{Y_m(t)\})^{\top}$.
Here, the real matrices $A$, $B$, $C$, $D$ are in  one-to-one correspondence
with $A_0,B_0,C_0,D_0$. Also, $w(t)$ is taken to be in a vacuum state where it
satisfies the It\^{o} relationship $dw(t)dw(t)^{\top} = (I+\imath \mathbb{J}_m)dt$; see \cite{JNP06}. Note that in this form it follows that $D$ is a real unitary symplectic matrix. That is, it is both unitary (i.e., $DD^{\top}=D^{\top} D=I$) and symplectic (a real $m \times m$ matrix is symplectic if $D \mathbb{J}_m D^{\top} =\mathbb{J}_m$). However, in the most general case, $D$ can be generalized to a symplectic matrix that represents a quantum network that includes ideal infinite bandwidth squeezing devices acting on the incoming field $w(t)$ before interacting with the system \cite{GJN10,NJD08}.  The matrices $A$, $B$, $C$, $D$ of a linear quantum stochastic system cannot be arbitrary and are not independent of one another. In fact, for the system to be physically realizable \cite{JNP06,NJP07b,NJD08}, meaning it represents a meaningful physical system, they must satisfy the constraints (see \cite{WNZJ12,JNP06,NJP07b,NJD08,GJN10})
\begin{eqnarray}
&&A\mathbb{J}_n + \mathbb{J}_n A^{\top} + B\mathbb{J}_mB^{\top}=0, \label{eq:pr-1}\\
&& \mathbb{J}_n  C^{\top} +B\mathbb{J}_mD^{\top}=0, \label{eq:pr-2}\\
&&D \mathbb{J}_m D^{\top} =  \mathbb{J}_{m}. \label{eq:pr-3}
\end{eqnarray}
Note that, more generally, one can consider linear quantum stocastic systems with less outputs than inputs by ignoring certain output quadrature pairs in $y(t)$ which are not of interest, and a corresponding generalized physical realizability conditions analogous to the above can be derived \cite{WNZJ12,Nurd14}. However, for the purpose of this paper it is sufficient to consider systems with the same number of inputs and outputs, as systems with less outputs than inputs can then be easily handled \cite{Nurd14}.

Following \cite{GJ09}, we denote a  linear quantum stochastic system having an equal number of inputs and outputs, and  Hamiltonian $H$,  coupling vector $L$, and scattering matrix $S$, simply as $G=(S, L,H)$ or $G=(S,Kx,\frac{1}{2} x^{\top}Rx)$. We also recall  the  {\em series product} $\triangleleft$  for open Markov quantum systems \cite{GJ09} defined by $G_2 \triangleleft G_1=(S_2S_1, L_2+S_2   L_1,H_1+H_2+\Im\{L_2^{\dag}S_2 L_1\})$, where $G_j=(S_j,L_j,H_j)$ for $j=1,2$. Since the series  product is associative, $G_n \triangleleft G_{n-1} \triangleleft \ldots \triangleleft G_1$ is unambiguously defined. The series product corresponds to a cascade connection $G_2 \triangleleft G_1$ where the outputs of $G_1$ are passed as inputs to $G_2$; see \cite{GJ09} for details.

\subsection{Realization theory}
Given an $n$ degree of freedom linear quantum stochastic  system with system matrices $A,B,C,D$, how can it be built from a bin of linear quantum components and which components are needed? This is the network synthesis question for linear quantum stochastic systems. It was shown in \cite{NJD08} that any linear quantum stochastic system with $n$ degrees of freedom system $G$ can be decomposed as the cascade of $n$ one degree of freedom system $G_1,G_2,\ldots,G_n$ together with some bilinear interaction Hamiltonians  between them, as illustrated in Fig.~\ref{fig:sysdecom}.  It was then shown how each one degree of freedom system can be realized from a certain bin of linear quantum optical components.

\begin{figure}[tbph]
\centering
\includegraphics[scale=0.35]{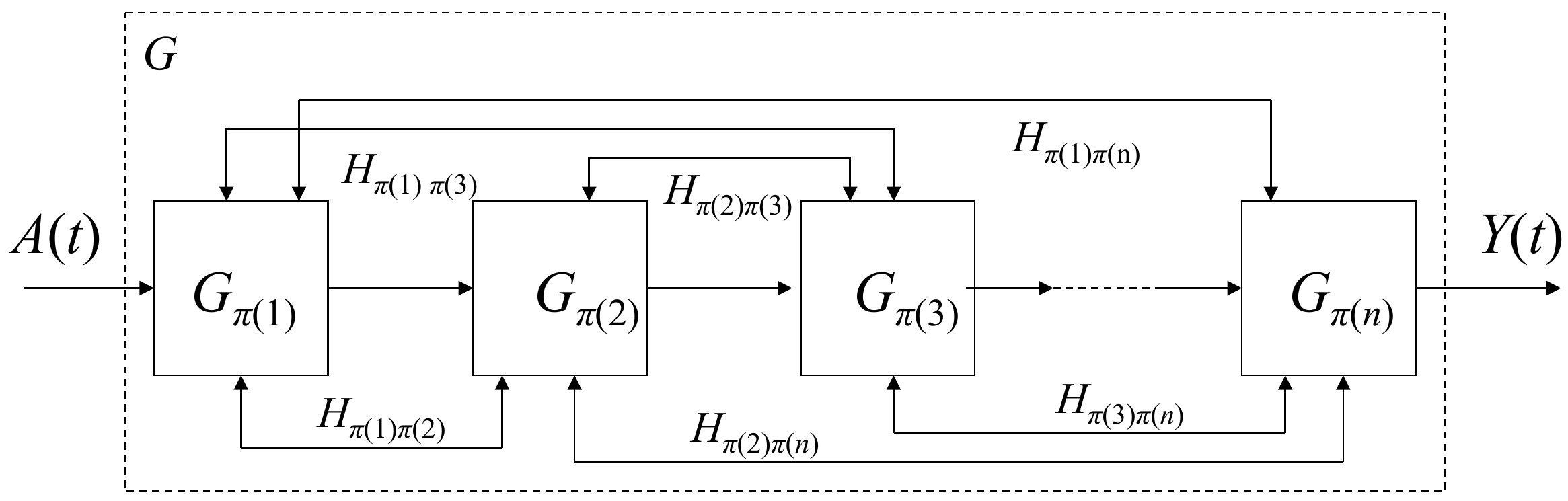}
\caption{Cascade realization of $G_{\pi}$ ($\pi$ is permutation map of $\{1,2,\ldots,n\}$ to itself), with direct interaction Hamiltonians $H^d_{\pi(j)\pi(k)}$ between sub-systems $G_{\pi(j)}$ and $G_{\pi(k)}$ for $j,k=1,2,\ldots,n$, following \cite{NJD08}. Illustration is for $n>3$.}
\label{fig:sysdecom}
\end{figure}

In certain control problems, such as $H^{\infty}$ and $H^2$/LQG coherent feedback control problems, it is the transfer function of the systems that is important rather than the system matrices $G=(A,B,C,D)$ themselves. The transfer function is defined as $\Xi_G(s)=C(sI-A)^{-1}B+D$, and rather than realizing a particular quartet $(A,B,C,D)$ one may consider realizing $G_T=(TAT^{-1},TB,CT^{-1},D)$ for a suitable arbitrary symplectic matrix $T$, since $G$ and $G_T$ have the same transfer function. The transformation $T$ is required to be symplectic to ensure that the new internal degrees of freedom $z(t)=Tx(t)$
satisfies the canonical commutation relations. It was shown in \cite{Nurd10b} that every completely passive linear quantum stochastic system, a system that can be realized using only passive quantum devices, has a pure cascade realization of its transfer function that does require any bilinear interaction Hamiltonians between oscillators in the cascade. In the context of Fig.~\ref{fig:sysdecom} above, this means that  all bilinear interaction Hamiltonians $H_{\pi(i)\pi(j)}$ can be removed.  Purely cascade realizations are simpler to implement and are therefore desirable.
However, it is not known whether the transfer function of general linear quantum stochastic systems outside of the completely passive class have such a realization. This is an important open problem that is resolved in this paper.

\section{A  symplectic QR decomposition algorithm}
\label{sec:sym-GS} The main purpose of this section is to develop a symplectic QR decompositon algorithm and derive a necessary and sufficient condition  for real square matrices of even dimension to possess this decomposition. The symplectic QR decomposition will play an important role in proving subsequent results that will be presented in  Section \ref{sec:pure-cascading}. We begin by recalling some useful definitions.

Let $X$ be an invertible  $2n \times 2n$ skew-symmetric matrix and let $\langle \circ, X \bullet \rangle=\circ^{\top} X \bullet$ be  a skew-symmetric bilinear form on $\mathbb{R}^{2n}$ induced by $X$.  A set of linearly independent vectors $v_1,v_2,\ldots,v_{2n}$ on $\mathbb{R}^{2n}$ is said to be a symplectic basis  with respect to $\langle \circ, X \bullet \rangle$ if $\langle v_i,Xv_j \rangle = -\langle v_j,Xv_i \rangle=X_{ij}$. The space $\mathbb{R}^{2n}$ endowed with  $\langle \circ, X \bullet \rangle$ forms a symplectic vector space. Thus, we shall also refer to $\langle \circ, X \bullet \rangle$ as a symplectic form.  A matrix $T \in \mathbb{R}^{2n \times 2n}$ is said to be sympletic with respect to  $X$ if $T^{\top} X T = X$. Therefore, if $T$ is sympletic with respect to $X$, then $Tv_1,Tv_2,\ldots,Tv_{2n}$ is also a symplectic basis for $\mathbb{R}^{2n}$ whenever $v_1,v_2,\ldots,v_{2n}$ is a symplectic basis. In this paper, we will be interested in $\mathbb{R}^{2n}$ as a symplectic vector space with $X=\mathbb{J}_n$. Therefore, unless stated otherwise, it is implicit throughout that the symplectic structure on $\mathbb{R}^{2n}$ is with respect to the symplectic form $\langle \circ, \mathbb{J}_n \bullet \rangle$. Note the standard property that if $T$ is a symplectic matrix then so is $T^{\top}$ and $T^{-1}$, see, e.g., \cite{deGoss11}. This property will often be invoked without further comment.

We also recall the following definition from \cite{Nurd10b}.
\begin{definition}
A square matrix $F$ of even dimension is said to be lower $2 \times 2$ block triangular if it has a
lower block triangular form when partitioned into $2 \times 2$
blocks:
\begin{eqnarray*}
F= \left[\begin{array}{ccccc} F_{11} & 0_{2 \times 2} & 0_{2
\times 2} & \ldots & 0_{2 \times 2}\\
F_{21} & F_{22} & 0_{2 \times 2} & \ldots & 0_{2 \times 2}\\
\vdots & \ddots & \ddots & \ddots & \vdots \\
F_{n1} & F_{n2} & \ldots & \ldots& F_{nn}
\end{array}\right],
\end{eqnarray*}
where $F_{jk}$, $j\leq k$, is of dimension $2 \times 2$.  Similarly, a matrix $F$ is said to be  upper $2 \times 2$ block triangular  if $F^{\top}$ is lower $2 \times 2$ block triangular.
\end{definition}

We are now ready to state the main lemma of this section that describes a symplectic QR decomposition algorithm with respect to the symplectic form $\langle \circ, \mathbb{J}_n \bullet \rangle$. The lemma is based on a symplectic Gram-Schmidt procedure that is different from symplectic Gram-Schmidt procedures to construct a canonical symplectic basis in a symplectic vector space, e.g., \cite[Proposition 40]{deGoss11} and its proof. It is in the same class of algorithms as, though not identical to, existing symplectic Gram-Schmidt procedures used in numerical analysis with respect to the symplectic form $\langle \circ,  \mathbb{K}_n \bullet \rangle$ with $\mathbb{K}_n = \mathbb{J} \otimes I_n$ \cite{Sal05}. In fact, our procedure is rather analogous to the Gram-Schmidt procedure in spaces with indefinite inner products as described in, e.g. \cite[Section 3.1]{GLR83}. In the lemma, a symplectic basis is constructed sequentially from a {\em given} and {\em fixed} set of linearly independent initial vectors $v_1,v_2,\ldots,v_{2n}$, which are presented to the procedure sequentially two at a time in that order.  As in the Gram-Schmidt procedure in indefinite inner product spaces, since the initial vectors are given, a certain condition is required for the new procedure proposed below to yield a symplectic basis for $\mathbb{R}^{2n}$.

\begin{lemma}[Symplectic QR decomposition]
\label{lem:sym-GS}
Let $V$ be a real invertible $2n \times 2n$ matrix with linearly independent columns $v_1$, $v_2$, $\ldots$, $v_{2n}$ from left to right. Let $M_{i}=\left[\begin{array}{cc} v_{2i-1} & v_{2i} \end{array}\right] \in \mathbb{R}^{2n \times 2}$  for $i=1,2,\ldots,n$
and $\tilde M_1=M_1$, $\tilde M_2= [\begin{array}{cc} M_1 & M_2 \end{array}]$, and $\tilde M_i =[\begin{array}{cccc} M_1 & M_2 & \ldots & M_i \end{array}]$  for $i=3,\ldots,n$,  and assume that $N_i=\tilde M_i^{\top}\mathbb{J}_n \tilde M_i$ is full rank for $i=1,2,\ldots,n$. Then $V$ has a QR decomposition $V=SY$ for some symplectic matrix $S$ and an upper $2 \times 2$ block triangular matrix $Y$. Moreover, $S$ can be constructed recursively by contructing a sequence of real numbers $\alpha_j$,  real $2 (j-1) \times 2$ matrices $\Xi_j$, and  real invertible $2j \times 2j$ matrices $S_{j}$, for $j=2,3,\ldots,n$. Define $\mu_1$ as the (1,2) element of the skew-symmetric matrix $N_1$, $\alpha_1=\sqrt{|\mu_1|}^{-1}$, $\Xi_1=I_2$, and
$$
S_1= M_1 \alpha_1 \Xi_1 \left[\begin{array}{cc} 1 & 0 \\ 0 & {\rm sgn}(\mu_1) \end{array} \right].
$$
For $j=2,3,4,\ldots,n$, define $S_k$ recursively as
\begin{align}
S_j&=\left[\begin{array}{cc} S_{j-1} & M_j\end{array} \right]  \left[\begin{array}{cc}  I_{2(j-1)}  & \alpha_j \Xi_j\left[\begin{array}{cc} 1 & 0 \\ 0 & {\rm sgn}(\mu_j) \end{array} \right]\\ 0 & \alpha_j \left[\begin{array}{cc} 1 & 0 \\ 0 & {\rm sgn}(\mu_j) \end{array} \right] \end{array}\right], \label{eq:def-Sj}
\end{align}
with $\alpha_j  = \sqrt{|\mu_j|}^{-1}$ and $\Xi_j  =\mathbb{J}_{j-1}S_{j-1}^{\top} \mathbb{J}_n M_{j}$,
where $\mu_j$ denotes the (1,2) element of $(S_{j-1} \Xi_j + M_j)^{\top} \mathbb{J}_n (S_{j-1} \Xi _j+ M_j)$.
Then $S_j$ satisfies $S_j^{\top} \mathbb{J}_n S_j = \mathbb{J}_{j}$ for $j=1,2,3,\ldots,n$, and $S= S_n$ is symplectic. In particular, the columns of $S_k$ are contained as the first $2k$ columns of $S_{k+1}$, thus forming a symplectic basis of $\mathbb{R}^{2n}$ for $k=n$. Moreover, defining  the invertible matrices
$$
X_1=\left[\begin{array}{cc} \alpha_1 \Xi_1 \left[\begin{array}{cc} 1 & 0 \\ 0 & {\rm sgn}(\mu_1) \end{array}\right] & 0_{2 \times 2(n-1) } \\ 0_{2(n-1) \times 2} & I_{2(n-1)} \end{array}\right].
$$
and
$$
X_j=\left[ \begin{array}{ccc}   I_{2(j-1)} & \alpha \Xi_j \left[\begin{array}{cc} 1 & 0 \\ 0 & {\rm sgn}(\mu_j) \end{array}\right]& 0 \\ 0 & \alpha_j \left[\begin{array}{cc} 1 & 0 \\ 0 & {\rm sgn}(\mu_j) \end{array}\right] & 0 \\ 0 & 0 & I_{2(n-j)} \end{array} \right]
$$
for $j=2,3,\ldots,n$, then $V= SY$ for an invertible upper $2 \times 2$ block triangular matrix $
Y=X^{-1}$, where $X=X_1  X_2 \cdots  X_{n-1}X_n$.
\end{lemma}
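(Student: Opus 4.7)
The plan is to prove the lemma by induction on $j$, establishing at each stage three parallel facts: (i) $S_j^{\top} \mathbb{J}_n S_j = \mathbb{J}_j$, (ii) the columns of $S_j$ are linearly independent, and (iii) $\tilde{M}_j = S_j Y_j$ for some invertible upper $2\times 2$ block triangular matrix $Y_j$. The conclusion $V = \tilde{M}_n = S Y$ with $S = S_n$ symplectic then follows from the case $j=n$, and the identification $Y = X^{-1}$ will be a bookkeeping exercise with the matrices $X_j$. The base case $j=1$ reduces to a direct $2 \times 2$ calculation: $S_1^{\top} \mathbb{J}_n S_1 = \alpha_1^2 D_1 N_1 D_1$ with $D_1 = \mathrm{diag}(1, \mathrm{sgn}(\mu_1))$, and one checks this equals $\mathbb{J}$ provided $\mu_1 \ne 0$, which follows from $N_1$ being invertible.

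For the inductive step I would set $U_j = S_{j-1} \Xi_j + M_j$ (so that the new pair of columns of $S_j$ equals $\alpha_j U_j D_j$) and first verify that the new columns are symplectically orthogonal to the old ones: expanding,
\begin{equation*}
S_{j-1}^{\top} \mathbb{J}_n U_j = S_{j-1}^{\top} \mathbb{J}_n S_{j-1} \Xi_j + S_{j-1}^{\top} \mathbb{J}_n M_j = \mathbb{J}_{j-1} \Xi_j + S_{j-1}^{\top} \mathbb{J}_n M_j,
\end{equation*}
which vanishes upon substituting $\Xi_j = \mathbb{J}_{j-1} S_{j-1}^{\top}\mathbb{J}_n M_j$ and using $\mathbb{J}_{j-1}^{2} = -I$. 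This is exactly the orthogonality that forces the symplectic Gram matrix $S_j^{\top} \mathbb{J}_n S_j$ to be block diagonal, and the new diagonal block $\alpha_j^{2} D_j U_j^{\top} \mathbb{J}_n U_j D_j$ reduces to $\mathbb{J}$ by the choice $\alpha_j = |\mu_j|^{-1/2}$, precisely mimicking the $j=1$ computation.

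The hard part is showing that $\mu_j \ne 0$ under the hypothesis that $N_j$ is invertible. Using the orthogonality established above, $U_j^{\top}\mathbb{J}_n U_j$ simplifies to $M_j^{\top}\mathbb{J}_n M_j + M_j^{\top}\mathbb{J}_n S_{j-1} \mathbb{J}_{j-1} S_{j-1}^{\top} \mathbb{J}_n M_j$, which I would recognize as the Schur complement of the $(1,1)$ block of
\begin{equation*}
[S_{j-1}, M_j]^{\top} \mathbb{J}_n [S_{j-1}, M_j] = \begin{pmatrix} \mathbb{J}_{j-1} & S_{j-1}^{\top}\mathbb{J}_n M_j \\ M_j^{\top}\mathbb{J}_n S_{j-1} & M_j^{\top}\mathbb{J}_n M_j \end{pmatrix}
\end{equation*}
(here I use $\mathbb{J}_{j-1}^{-1} = -\mathbb{J}_{j-1}$). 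By the inductive hypothesis $\tilde{M}_{j-1} = S_{j-1} Y_{j-1}$ with $Y_{j-1}$ invertible, so $\tilde{M}_j = [S_{j-1}, M_j]\,\mathrm{diag}(Y_{j-1}, I_2)$, and consequently $N_j$ invertible is equivalent to $[S_{j-1}, M_j]^{\top}\mathbb{J}_n[S_{j-1}, M_j]$ invertible, which in turn (since $\mathbb{J}_{j-1}$ is invertible) forces the Schur complement to be invertible. A $2\times 2$ skew-symmetric matrix is invertible iff its $(1,2)$ entry is nonzero, so $\mu_j \ne 0$ as required.

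To finish, I would note that inverting the recursion in the lemma gives $M_j = S_j \bigl[ \begin{smallmatrix} -\Xi_j \\ (\alpha_j D_j)^{-1}\end{smallmatrix}\bigr]$ while $S_{j-1}$ is the first $2(j-1)$ columns of $S_j$; combining with $\tilde{M}_{j-1} = S_{j-1} Y_{j-1}$ yields $\tilde{M}_j = S_j Y_j$ with $Y_j = \bigl[ \begin{smallmatrix} Y_{j-1} & -\Xi_j \\ 0 & (\alpha_j D_j)^{-1}\end{smallmatrix}\bigr]$, manifestly upper $2 \times 2$ block triangular and invertible, completing the induction. Finally, one checks by direct inspection that $X_j$ is exactly the transformation such that $[S_{j-1}, M_j, M_{j+1}, \ldots, M_n]\,X_j = [S_j, M_{j+1}, \ldots, M_n]$, so that $V X_1 X_2 \cdots X_n = S_n = S$ and therefore $Y = X^{-1} = Y_n$, which is upper $2\times 2$ block triangular as each factor $X_j^{-1}$ is.
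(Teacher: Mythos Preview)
Your proof is correct and follows essentially the same inductive strategy as the paper: establish $S_{j-1}^{\top}\mathbb{J}_n U_j=0$ from the choice of $\Xi_j$, then argue $U_j^{\top}\mathbb{J}_n U_j$ is nonsingular from the full-rank hypothesis on $N_j$, normalize, and read off the upper $2\times 2$ block triangular factor from the accumulated transformations $X_j$. The only cosmetic difference is that you phrase the key nonsingularity step as a Schur-complement identity for $[S_{j-1},M_j]^{\top}\mathbb{J}_n[S_{j-1},M_j]$, whereas the paper obtains the same conclusion by writing $[S_{j-1},U_j]=[S_{j-1},M_j]\left[\begin{smallmatrix} I & \Xi_j \\ 0 & I\end{smallmatrix}\right]$ and noting the resulting Gram matrix is block diagonal; these are two ways of stating the same block elimination, and your explicit bookkeeping of $Y_j$ via $\tilde M_j=S_jY_j$ is a slightly tidier substitute for the paper's remark that the columns of $S_{j-1}$ are invertible linear combinations of those of $\tilde M_{j-1}$.
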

\begin{proof}
Since $M_1^{\top} \mathbb{J}_n M_1$ is a real $2 \times 2$ skew-symmetric matrix and is full rank by hypothesis,  it is of the form $$M_1^{\top} \mathbb{J}_n M_1 = \left[\begin{array}{cc} 0 & \mu_1 \\ -\mu_1 & 0 \end{array}\right],$$ with $\mu_1 \neq 0$. It follows immediately from the given construction of $S_1$ in the statement of the theorem that $S_1^{\top}  \mathbb{J}_n S_1=\mathbb{J}_1$. Therefore, the columns of $S_1$ are mutually skew-orthogonal.  We proceed further by induction.

Suppose that  the columns of $S_j$, as constructed according to the theorem, form a partial symplectic basis for $1<j < n$, i.e., $S_j^{\top} \mathbb{J}_n S_j = \mathbb{J}_j$. Consider now the matrix $Z_{j+1} = S_j \Xi_{j+1} + M_{j+1}$ for some real $2j \times 2$ matrix $\Xi_{j+1}$. We will choose $\Xi_{j+1}$ to satisfy $S_{j}^{\top} \mathbb{J}_n Z_{j+1}=0$. This yields the equation $S_j^{\top} \mathbb{J}_n S_j \Xi_{j+1} + S_j^{\top} \mathbb{J}_n M_{j+1} =0$. Since $S_j^{\top} \mathbb{J}_n S_j = \mathbb{J}_{j}$ we can solve for $\Xi_{j+1}$ to obtain $\Xi_{j+1} = \mathbb{J}_j S_j^{\top} \mathbb{J}_n M_{j+1}$.
Define the real $2 \times 2$ skew-symmetric matrix $Y_{j+1}=Z_{j+1}^{\top} \mathbb{J}_n Z_{j+1}$. We will  show that $Y_{j+1} \neq 0_{2 \times 2}$. We first note that since $N_{j+1}=[\begin{array}{cccc} M_1 & \ldots & M_j & M_{j+1} \end{array}]^{\top} \mathbb{J}_n[\begin{array}{cccc} M_1 & \ldots & M_j & M_{j+1} \end{array}]$ is full rank by hypothesis,  so is $[\begin{array}{cc} S_j  & M_{j+1} \end{array}]^{\top} \mathbb{J}_n[\begin{array}{cc} S_j & M_{j+1} \end{array}]$. This is a consequence of the fact that the columns of $S_j$ are, by construction, linearly independent linear combinations of the columns of $M_1,M_2,\ldots, M_j$. Moreover, since
$$
[\begin{array}{cc} S_j  & Z_{j+1} \end{array}]=[\begin{array}{cc} S_j  & M_{j+1} \end{array}]\left[\begin{array}{cc} I_j & \Xi_{j+1} \\ 0 & I_2\end{array}\right],
$$
while the matrix
$$
\left[\begin{array}{cc} I_j & \Xi_{j+1} \\ 0 & I_2\end{array}\right]
$$
is evidently invertible, we conclude that $$[\begin{array}{cc} S_j  & Z_{j+1} \end{array}]^{\top} \mathbb{J}_n [\begin{array}{cc} S_j  & Z_{j+1} \end{array}]$$ is full rank skew-symmetric. By the given construction of $S_j$ and $\Xi_{j+1}$, $[\begin{array}{cc} S_j  & Z_{j+1} \end{array}]^{\top} \mathbb{J}_n [\begin{array}{cc} S_j  & Z_{j+1} \end{array}]$ is necessarily of the form
$$
[\begin{array}{cc} S_j  & Z_{j+1} \end{array}]^{\top} \mathbb{J}_n [\begin{array}{cc} S_j  & Z_{j+1} \end{array}] =\left[\begin{array}{cc} \mathbb{J}_j & 0 \\ 0 & Y_{j+1} \end{array}\right].
$$
From the fact that the left hand side of the identity is full rank, it follows immediately that $Y_{j+1}$ is  full rank $2 \times 2$ skew-symmetric. Therefore, it is necessarily of the form
$$
Y_{j+1}=\left[\begin{array}{cc} 0 & \mu_{j+1} \\ -\mu_{j+1} & 0\end{array} \right],
$$
with $\mu_{j+1} \neq 0$.  Define $\alpha_{j+1} = \sqrt{|\mu_{j+1}|}^{-1}$. Consider now the matrix
$$
\tilde Z_{j+1}=\alpha_{j+1} Z_{j+1} \left[\begin{array}{cc} 1 & 0 \\ 0 & {\rm sgn}(\mu_{j+1})\end{array}\right].
$$
Some brief calculations shows that, by construction, the matrix $\tilde Z_{j+1}$ satisfies $ S_j^{\top}  \mathbb{J}_n \tilde Z_{j+1}=0$, and
$
\tilde Z_{j+1}^{\top} \mathbb{J}_n \tilde Z_j =\mathbb{J}_1.
$
Therefore, we have shown for $1 <j<n$ that if the hypotheses of the theorem hold  and $S_j$ satisfies $S_j^{\top} \mathbb{J}_n S_j=\mathbb{J}_j$ then the matrix $S_{j+1}$ given by (\ref{eq:def-Sj})  satisfies $S_{j+1}^{\top} \mathbb{J}_n S_{j+1}=\mathbb{J}_{j+1}$. In particular, $S=S_n$ is a symplectic matrix.

Note that by the above construction each  $X_i$ as defined in the lemma is invertible. Direct calculations then show that $VX_1=[\begin{array}{cc} S_1 & V_{2\rightarrow n} \end{array}]$, $VX_1 X_2 =[\begin{array}{cc} S_2 & V_{3 \rightarrow n} \end{array}]$, $\ldots$, $VX_1X_2 \cdots X_n=S_n = S$, where $V_{j \rightarrow n}$ is matrix constructed of columns  $2j$ to $2n$ of $V$ from left to right. Moreover, clearly $X=X_1X_2 \cdots X_n$ is upper $2\times 2$ block triangular since each of the $X_j$ in the product has this structure, and therefore so is $Y=X^{-1}$. Hence, $V$ has the symplectic QR decomposition $V=SX^{-1}=SY$. This concludes the proof of the lemma. 
\end{proof}

Let us now look at an example to illustrate Lemma \ref{lem:sym-GS}.
\begin{example}
Consider the matrix
$$
V=\left[\begin{array}{cccc}
-15 & 42 & -12 & 3 \\
33 & -22 & 7 & 28 \\
9 & 26 & -43 & 44 \\
5 & 26 & -45 & -37
\end{array}\right].
$$
It can be verified that $V^{\top} \mathbb{J}_2 V$ and the matrix $N_1$ as defined in Lemma \ref{lem:sym-GS} are full rank. The symplectic matrix produced by executing the symplectic QR decomposition is then
$$
S=\left[\begin{array}{cccc}
-0.4862 & -1.3612 & -0.1418 & -1.0405 \\
1.0695 & 0.7130 & 0.0975 & 1.4863 \\
0.2917 & -0.8427 & -0.7193 & 0.4113 \\
0.1621 & -0.8427 & -0.7569 & -1.1093
\end{array}\right],
$$
while the upper $2\times2$ block triangular matrix $Y$ such that $V=SY$ is
$$
Y=\left[\begin{array}{cccc}
30.8545    &     0  &  -0.7130  & -28.0024 \\
         0 &  -30.8545    & 3.2734  & -34.7437 \\
         0     &    0   &  55.6558     &    0 \\
         0    &     0    &     0  &  55.6558
\end{array}\right];
$$
\end{example}

Finally, we give  a necessary and sufficient condition for a $2n \times 2n$ matrix $V$ to possess a symplectic QR decomposition.

\begin{theorem}
\label{thm:iff-sym-GS} Let $V$ and $N_j$ be as defined in Lemma \ref{lem:sym-GS}. Then there exists a symplectic $QR$ decomposition $V=SY$, with $S$ a symplectic matrix and $Y$ an invertible upper $2\times 2$ block triangular matrix, if and only if the matrices $N_1$, $N_2$, $\ldots$, $N_{n}$ are full rank.
\end{theorem}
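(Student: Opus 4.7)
The sufficiency direction is already settled: Lemma \ref{lem:sym-GS} explicitly constructs $S$ and $Y$ under the hypothesis that each $N_j$ is full rank. So the plan is to focus on the necessity direction, namely: if $V = SY$ with $S$ symplectic and $Y$ invertible upper $2\times2$ block triangular, then $N_1,\ldots,N_n$ must all be full rank.

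The key observation is that the block-triangular structure of $Y$ decouples the first $2j$ columns of $V$ from the last $2(n-j)$ columns of $S$. Concretely, partitioning $S = [\,\tilde S_j \ \ S'_j\,]$ into its first $2j$ and last $2(n-j)$ columns, and letting $\tilde Y_j$ denote the top-left $2j \times 2j$ block of $Y$, the upper block triangular form of $Y$ immediately gives $\tilde M_j = \tilde S_j \tilde Y_j$ for each $j = 1, 2, \ldots, n$.

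Next I would exploit the symplectic condition $S^\top \mathbb{J}_n S = \mathbb{J}_n$. Reading off the top-left $2j \times 2j$ block of this identity yields $\tilde S_j^\top \mathbb{J}_n \tilde S_j = \mathbb{J}_j$. Substituting into the definition of $N_j$ gives
\begin{equation*}
N_j \;=\; \tilde M_j^\top \mathbb{J}_n \tilde M_j \;=\; \tilde Y_j^\top \,\tilde S_j^\top \mathbb{J}_n \tilde S_j\, \tilde Y_j \;=\; \tilde Y_j^\top \mathbb{J}_j \tilde Y_j.
\end{equation*}
Since $Y$ is invertible and upper $2\times 2$ block triangular, each of its diagonal $2\times 2$ blocks is invertible, and hence $\tilde Y_j$, being itself upper block triangular with these blocks on its diagonal, is invertible. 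Combined with the fact that $\mathbb{J}_j$ is full rank, this forces $N_j$ to be full rank for every $j$, completing the necessity argument.

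I do not expect a serious obstacle here; the argument is essentially a transparent block computation once one recognizes that the upper block triangular structure of $Y$ ensures the first $2j$ columns of $V$ depend only on the first $2j$ columns of $S$. The only mildly delicate point is justifying that $\tilde Y_j$ is invertible from the invertibility of $Y$ alone, which follows immediately from inspection of the block-triangular structure (the $2\times 2$ diagonal blocks of $Y$ must be nonsingular for $Y$ to be nonsingular, and $\tilde Y_j$ inherits exactly the first $j$ of these as its diagonal blocks).
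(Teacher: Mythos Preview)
Your proposal is correct and follows essentially the same approach as the paper. The paper computes $V^{\top}\mathbb{J}_n V = Y^{\top}\mathbb{J}_n Y$ globally and then extracts the top-left $2k\times 2k$ block to obtain $N_k = Y_{11,k}^{\top}\mathbb{J}_k Y_{11,k}$, which is exactly your identity $N_j = \tilde Y_j^{\top}\mathbb{J}_j \tilde Y_j$ (your $\tilde Y_j$ is the paper's $Y_{11,j}$); the only difference is that you extract columns first and then form the skew bilinear product, which is a cosmetic reordering of the same computation.
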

\begin{proof}
The if part is the content of Lemma  \ref{lem:sym-GS}. For the necessity of the full rankness of $N_1$, $N_2$, $\ldots$, $N_{n-1}$,  first note that $V^{\top} \mathbb{J}_n V = Y^{\top}S^{\top} \mathbb{J}_n  SY = Y^{\top} \mathbb{J}_n  Y$. Take any $k \in \{1,2,\ldots,n-1\}$
and partition $Y$ as
$$
Y =\left[\begin{array}{cc} Y_{11,k} & Y_{12,k} \\ 0 & Y_{22,k}\end{array} \right],
$$
with $Y_{11,k} \in \mathbb{R}^{2k \times 2k}$, $Y_{12,k} \in \mathbb{R}^{2k \times 2(n-k)}$, and $Y_{22,k} \in \mathbb{R}^{2(n-k) \times 2(n-k)}$, where $Y_{11,k}$ and $Y_{22,k}$ are invertible upper $2 \times 2$ block triangular matrices. Since $N_k = [\begin{array}{cc} I_{2k} & 0 \end{array}] V^{\top} \mathbb{J}_n V [\begin{array}{cc} I_{2k} & 0 \end{array}]^{\top}$, from the expression for $Y$ above we immediately get that $N_k =Y_{11,k}^{\top} \mathbb{J}_k Y_{11,k}$, which is evidently invertible for $k=1,2,\ldots,n-1$, while $N_n=Y^{\top} \mathbb{J}_n  Y$ is invertible by hypothesis. 
\end{proof}

We now provide an example of an instance  where the condition of Theorem \ref{thm:iff-sym-GS}  {\em fails}  and hence   $V$ does not have symplectic QR decompositon.
\begin{example}
\label{eg:rank-fail} Consider the matrix
$$
V=\left[\begin{array}{cccc}
0 & 1 & 0 & 0 \\
0 & 0 & 0 & 1 \\
-1 & 0 & 0 & 0 \\
0 & 0 & -1 & 0
\end{array}
\right].
$$
Then $V^{\top} \mathbb{J}_n V$ is full rank. However, it may be verified that the matrix $N_1$  associated with $V$ is a zero matrix, hence the condition of Theorem \ref{thm:iff-sym-GS} is not satisfied and $V$ does not have a symplectic QR decomposition.
\end{example}

\section{Pure cascade realization of the transfer function of  generic linear quantum stochastic systems}
\label{sec:pure-cascading} In this section we employ the results from the preceding section to obtain sufficient conditions under which a (physically realizable) linear quantum stochastic system has a pure cascade realization. It is shown that this condition will be met by  generic (in a sense that will be detailed in this section) linear quantum stochastic systems, thereby extending the results obtained in \cite{Nurd10b}  for the special case of completely passive systems to generic linear quantum stochastic systems (including a large class of active systems). Let us now recall a characterization of  linear quantum stochastic systems $G$ that have a pure cascade realization, i.e., $G$ can be written as $G=G_n \triangleleft G_{n-1} \triangleleft \cdots \triangleleft G_1$ for some distinct one degree of freedom systems $G_1,G_2,\ldots,G_n$.

\begin{theorem}
\label{th:casc-struc} \textbf{\cite[Theorem 4]{Nurd10b}} Let $R=[R_{ij}]_{i,j=1,2,\ldots,n}$ with $R_{ij} \in \mathbb{R}^{2 \times 2}$, and $K=[\begin{array}{cccc} K_1 & K_2 & \ldots & K_n \end{array}]$ with $K_i \in \mathbb{C}^m$. A linear quantum stochastic system
$G=(S,Kx,\frac{1}{2}x^{\top} R x)$ with $n$ degrees of freedom is
realizable by a pure cascade of $n$ one degree of freedom harmonic
oscillators (without a direct interaction Hamiltonian) if and only if the $A$
matrix is similar via a symplectic permutation matrix to a lower $2 \times 2$ block triangular matrix. That is, there exists a symplectic permutation matrix $P$ such that $PAP^{\top}= \tilde A$, where $\tilde A$ is lower $2  \times 2$ block triangular. Let $\tilde R=PRP^{\top}=[\tilde R_{ij}]$, $\tilde K = KP^{\top} = [\begin{array}{cccc} \tilde K_1 & \tilde K_2 & \ldots & \tilde K_n \end{array}]$, with $\tilde R_{ij} \in \mathbb{R}^{2 \times 2}$ and $\tilde K_j \in \mathbb{C}^{m}$. If the condition is satisfied then $G$ can be explicitly constructed as
the cascade connection $G_n \triangleleft G_{n-1} \triangleleft
\ldots \triangleleft G_1$  with $G_1=(S,\tilde K_1 \tilde x_1,\frac{1}{2} \tilde x_1^{\top}
\tilde R_{11}\tilde x_1)$, and $G_k=(I,\tilde K_k \tilde x_k,\frac{1}{2} \tilde x_k^{\top} \tilde R_{kk} \tilde x_k)$ for
$k=2,\ldots,n$, where
$\tilde x =(q_{\pi(1)},p_{\pi(1)},q_{\pi(2)},p_{\pi(2)},\ldots,q_{\pi(n)},p_{\pi(n)})^{\top}$, and $\pi$ is a permutation of $\{1,2,\ldots,n\}$ onto itself such that $\tilde x = Px$.
\end{theorem}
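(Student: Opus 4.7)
The plan is to work in the permuted coordinates $\tilde{x}=Px$ throughout, so that the Hamiltonian becomes $\tfrac{1}{2}\tilde{x}^{\top}\tilde{R}\tilde{x}$ with $\tilde{R}=PRP^{\top}$ symmetric, the coupling becomes $\tilde{K}\tilde{x}$ with $\tilde{K}=KP^{\top}$, and the state matrix becomes $\tilde{A}=PAP^{\top}=2\mathbb{J}_n(\tilde{R}+\Im\{\tilde{K}^{\dag}\tilde{K}\})$. Since $\mathbb{J}_n={\rm diag}_n(\mathbb{J})$ is block diagonal on $2\times 2$ blocks, $\tilde{A}$ is lower $2\times 2$ block triangular if and only if $\tilde{R}+\Im\{\tilde{K}^{\dag}\tilde{K}\}$ is. My first step is to iterate the series product: associativity and a straightforward induction show that any cascade $G_n\triangleleft\cdots\triangleleft G_1$ of one-degree-of-freedom systems $G_k=(S_k,L_k,H_k)$ has $S_{\mathrm{tot}}=S_n\cdots S_1$, $L_{\mathrm{tot}}=\sum_{k}(S_n\cdots S_{k+1})L_k$, and $H_{\mathrm{tot}}=\sum_k H_k+\sum_{i<j}\Im\{L_j^{\dag}S_j\cdots S_{i+1}L_i\}$. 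Specialising to the statement's choice $S_1=S$, $S_k=I$ for $k\geq 2$ collapses these to $S_{\mathrm{tot}}=S$, $L_{\mathrm{tot}}=\tilde{K}\tilde{x}$, and cross Hamiltonian terms of the form $\Im\{\tilde{x}_j^{\top}\tilde{K}_j^{\dag}\tilde{K}_i\tilde{x}_i\}$.

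For the ``only if'' direction I symmetrise the cross Hamiltonians to read off the off-diagonal blocks of the overall $R_{\mathrm{tot}}$. Using commutation of operators across distinct modes, $\Im\{\tilde{x}_j^{\top}\tilde{K}_j^{\dag}\tilde{K}_i\tilde{x}_i\}=\tilde{x}_j^{\top}\Im\{\tilde{K}_j^{\dag}\tilde{K}_i\}\tilde{x}_i=\tilde{x}_i^{\top}\Im\{\tilde{K}_j^{\dag}\tilde{K}_i\}^{\top}\tilde{x}_j$, so for $i<j$ the $(i,j)$ block of $R_{\mathrm{tot}}$ equals $\Im\{\tilde{K}_j^{\dag}\tilde{K}_i\}^{\top}$. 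Invoking $\Im\{M^{\dag}\}=-\Im\{M\}^{\top}$ with $M=\tilde{K}_j^{\dag}\tilde{K}_i$ rewrites this as $-\Im\{\tilde{K}_i^{\dag}\tilde{K}_j\}$, which exactly cancels the $(i,j)$ block $\Im\{\tilde{K}_i^{\dag}\tilde{K}_j\}$ of $\Im\{K_{\mathrm{tot}}^{\dag}K_{\mathrm{tot}}\}$. Hence the strict upper $2\times 2$ block triangle of $R_{\mathrm{tot}}+\Im\{K_{\mathrm{tot}}^{\dag}K_{\mathrm{tot}}\}$ vanishes and the cascade's $A$-matrix is lower $2\times 2$ block triangular, proving necessity after undoing the permutation.

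For the ``if'' direction I take $\tilde{A}$ lower $2\times 2$ block triangular, extract $\tilde{R}_{kk}$ and $\tilde{K}_k$ from the diagonal blocks, and form the $G_k$ as prescribed. The computation just performed shows the resulting cascade has $A$-matrix $A_{\mathrm{tot}}$ agreeing with $\tilde{A}$ on the diagonal (both equal $2\mathbb{J}(\tilde{R}_{kk}+\Im\{\tilde{K}_k^{\dag}\tilde{K}_k\})$) and in the strict upper block triangle (both zero). For the strict lower blocks, by symmetry of $R_{\mathrm{tot}}$ the cascade produces $(A_{\mathrm{tot}})_{ji}=4\mathbb{J}\Im\{\tilde{K}_j^{\dag}\tilde{K}_i\}$, while the target is $\tilde{A}_{ji}=2\mathbb{J}(\tilde{R}_{ji}+\Im\{\tilde{K}_j^{\dag}\tilde{K}_i\})$. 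These agree because $\tilde{A}_{ij}=0$ for $i<j$ forces $\tilde{R}_{ij}=-\Im\{\tilde{K}_i^{\dag}\tilde{K}_j\}$, and symmetry of $\tilde{R}$ then yields $\tilde{R}_{ji}=\tilde{R}_{ij}^{\top}=\Im\{\tilde{K}_j^{\dag}\tilde{K}_i\}$. The matrices $B$, $C$, $D$ of the cascade automatically coincide with those of $G$, since they are determined by $\tilde{K}$ and $S$ through the physical realizability formulas \eqref{eq:pr-1}--\eqref{eq:pr-3} together with the expressions for $B_0$, $C_0$, $D_0$.

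The step I anticipate as the main obstacle is the careful bookkeeping needed to pull the symmetric matrix $R_{\mathrm{tot}}$ out of the sum of Hermitian cross Hamiltonians and to track the signs arising from identities such as $\Im\{M^{\dag}\}=-\Im\{M\}^{\top}$; these are precisely what make the off-diagonal blocks of $R_{\mathrm{tot}}$ and of $\Im\{K_{\mathrm{tot}}^{\dag}K_{\mathrm{tot}}\}$ annihilate in the strict upper block triangle, and the block-diagonal structure of $\mathbb{J}_n$ then propagates the cancellation unchanged through $A=2\mathbb{J}_n(R+\Im\{K^{\dag}K\})$, closing both directions.
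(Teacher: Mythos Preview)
The paper does not supply a proof of this theorem: it is quoted from \cite[Theorem~4]{Nurd10b}, and the only commentary is the remark following the statement, which explains that the symplectic permutation $P$ is a trivial generalization of the cited result (reordering the $(q_i,p_i)$ pairs). There is therefore no ``paper's own proof'' to compare against.

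Your argument is nonetheless the natural one and is essentially correct. The computation of the iterated series product, the extraction of $R_{\mathrm{tot}}$ from the cross Hamiltonians, the cancellation in the strict upper $2\times 2$ blocks via $\Im\{M^{\dag}\}=-\Im\{M\}^{\top}$, and the reconstruction of the strict lower blocks from $\tilde R_{ij}=-\Im\{\tilde K_i^{\dag}\tilde K_j\}$ for $i<j$ are all sound, and together they recover both directions along with the explicit cascade formulas.

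One small point deserves attention. In the ``only if'' direction you must start from an \emph{arbitrary} pure cascade $G_n\triangleleft\cdots\triangleleft G_1$, where each $G_k=(S_k,L_k,H_k)$ may carry its own scattering matrix $S_k$; you instead immediately specialize to $S_1=S$, $S_k=I$ for $k\geq 2$, which is the form asserted only in the ``if'' direction. Fortunately your cancellation argument extends verbatim: with general unitary $S_k$'s one has $(K_{\mathrm{tot}})_k=(S_n\cdots S_{k+1})K_k'$ and the cross term becomes $\Im\{K_j'^{\dag}S_j\cdots S_{i+1}K_i'\}$, and the same identity $\Im\{M^{\dag}\}^{\top}=-\Im\{M\}$ applied to $M=K_i'^{\dag}S_{i+1}^{\dag}\cdots S_j^{\dag}K_j'$ (after telescoping the unitary products in $K_{\mathrm{tot}}^{\dag}K_{\mathrm{tot}}$) yields the same cancellation in the strict upper blocks. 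Stating this explicitly would close the gap.
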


\begin{remark}
The theorem has been stated as a minor and trivial generalization of \textbf{\cite[Theorem 4]{Nurd10b}}. The original did not include the additional freedom of allowing a symplectic permutation matrix to transform $A$ into lower $2 \times 2$ block triangular form, corresponding to a mere permutation of pairs of position and momentum operators in $x$. For instance, if $A$ is in upper $2 \times 2$ block triangular form it can be trivially transformed into lower  $2 \times 2$ block triangular form by a suitable symplectic permutation matrix, which by \textbf{\cite[Theorem 4]{Nurd10b}} would then be physically realizable by pure cascading.
\end{remark}

Given an $n$ degree of freedom linear quantum stochastic system $G=(A,B,C,D)$ with transfer function $\Xi_G(s)=C(sI-A)^{-1}B+D$, the problem that will be addressed  is how to obtain a cascade of $n$ one degree of freedom linear quantum stochastic systems that has transfer function $\Xi_G(s)$, if such a cascade exists. Recall that for any symplectic matrix $T$, the system $G_T=(TAT^{-1},TB,CT^{-1},D)$  is a physically realizable system that has the same transfer function as $G$. The main strategy is to find a symplectic matrix $T$ such that $G_T$ is the cascade realization that is sought. Before stating the main result, let us recall the real Jordan canonical form  of a real matrix; see, e.g., \cite[Section 3.4]{HJ85}. Let $A$ be a real square matrix then $A$ can always be decomposed as $A=  V  J_A  V^{-1}$, where $ J_A$ is a Jordan canonical form for $A$. Of course, although $A$ is real, its eigenvalues and eigenvectors can be complex,  but they always come in complex conjugate pairs. That is, if $\lambda$ and $v$ are a complex eigenvalue and eigenvector of $A$ then so are $\lambda^*$ and $v^{\#}$, respectively. Therefore, in general, $V$ and $ J_A$ may have complex entries. However, when $A$ is real it is also similar to a real Jordan canonical form. A real Jordan  block for a real Jordan canonical corresponding to a real eigenvalue of $A$ is the same as the corresponding block in the Jordan canonical form. However,  to a pair of conjugate complex eigenvalues $\lambda_k=a_k+\imath b_k $ and $\lambda^*=a_k-\imath b_k $ there will associated with them one or more real Jordan blocks of the upper $2 \times 2$ block triangular form
$$
\left[\begin{array}{cccccc}  C_k & I_2 & 0_{2 \times 2} & 0_{2 \times 2} & \ldots & 0_{2 \times 2} \\
0_{2 \times 2} & C_k & I_2 & 0_{2 \times 2} & \ldots & 0_{2 \times 2}  \\
0_{2 \times 2} & 0_{2 \times 2} & C_k & \ddots & \ddots & 0_{2 \times 2} \\
\vdots & \vdots & \vdots & \ddots & \ddots & I_2 \\
0_{2 \times 2} & 0_{2 \times 2} &  0_{2\times 2 } & 0_{2 \times 2} & \ldots & C_k\end{array}\right],
$$
with
$$
C_k=\left[\begin{array}{cc} a_k & b_k \\ -b_k & a_k \end{array} \right].
$$
With respect to the real Jordan blocks, $A$ can be written as $\tilde V \tilde J_A \tilde V^{-1}$, where $\tilde J_A$ is a real Jordan canonical form of $A$ (unique up to permutation of the real Jordan blocks), and $\tilde V$ a real invertible matrix \cite[Section 3.4]{HJ85}. We are now ready to state the main results of this section.

\begin{lemma}[Symplectic Schur decomposition]
\label{lem:sym-Schur-decomposition} Let $A$ be a real $2n \times 2n$ matrix. Then $A$ has a symplectic Schur decomposition  $A= S^{-1}US$ with $U$ lower $2 \times 2$ block triangular if there exists a real invertible $2n \times 2n$ matrix
$$
\tilde V =[\begin{array}{ccccccc}  \tilde v_1 &  \tilde v_2 & \ldots &  \tilde v_{2n-1} & \tilde v_{2n} \end{array}]
$$
such that
\begin{description}
\item[(i)] $\tilde V$ brings $A$ into a real Jordan canonical form $\tilde J_A= \tilde V^{-1} A \tilde V$, with $\tilde J_A$ in the upper $2 \times 2$ block triangular form $\tilde J_A= {\rm diag}(\tilde J_{A,r}, \tilde J_{A,c})$, where $\tilde J_{A,r}$ contains all  Jordan blocks  corresponding to the (possibly repeated) real eigenvalues of $A$ (in upper triangular form), and  $\tilde J_{A,c}$ contains all real Jordan blocks corresponding to the complex eigenvalues of $A$.

\item[(ii)] The matrices $\tilde N_1$, $\tilde N_2$, $\ldots$, $\tilde N_{n-1}$ given by
\begin{eqnarray*}
\tilde N_1 &=&  [\begin{array}{cc} \tilde v_1 & \tilde v_2 \end{array}]^{\top} \mathbb{J}_n [\begin{array}{cc} \tilde v_1 & \tilde v_2 \end{array}],\\
\tilde N_2 &=& [\begin{array}{cccc} \tilde v_1 & \tilde v_2 & \tilde v_3 & \tilde v_4 \end{array}]^{\top} \mathbb{J}_n [\begin{array}{cccc} \tilde v_1 & \tilde v_2 & \tilde v_3 & \tilde v_4 \end{array}],\\
&\vdots& \\
\tilde N_{n-1} &=& [\begin{array}{ccccc} \tilde v_1 & \tilde v_2 & \ldots & \tilde v_{2n-3} & \tilde v_{2n-2} \end{array}]^{\top} 
\mathbb{J}_n [\begin{array}{ccccc} \tilde v_1 & \tilde v_2 & \ldots & \tilde v_{2n-3} & \tilde v_{2n-2} \end{array}].
\end{eqnarray*}
are all full rank.

If the above conditions hold, let $S_1$ be a $2n \times 2n$ symplectic matrix obtained from $\tilde V$ by applying the symplectic QR decomposition of Lemma \ref{lem:sym-GS} so that $\tilde V= S_1 Y$, for an invertible  $2n \times 2n$ upper $2 \times 2$ block triangular matrix $Y$ as given by the lemma, and let $P \in \mathbb{R}^{2n \times 2n}$ be a permutation matrix that implements the mapping $(q_1,p_1,q_2,p_2,\ldots,q_n,p_n)^{\top} \mapsto (q_n,p_n,q_{n-1},p_{n-1},\ldots,q_1,p_1)^{\top}$. Then $A$ has the symplectic Schur decomposition $A=S^{-1} U S$ with $S=PS_1^{-1}$ and $U$ lower $2\times 2$ block triangular.
\end{description}
\end{lemma}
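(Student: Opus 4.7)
The plan is to exploit the symplectic QR decomposition of Lemma~\ref{lem:sym-GS} to symplectically transform $A$ into upper $2\times 2$ block triangular form, and then apply a pair-reversing symplectic permutation to flip ``upper'' into ``lower.''

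First, I would reduce to a real Jordan form. By hypothesis~(i), $A = \tilde V \tilde J_A \tilde V^{-1}$ with $\tilde J_A$ already upper $2\times 2$ block triangular: the block $\tilde J_{A,c}$ is manifestly so by the explicit form of real Jordan blocks attached to conjugate pairs, while $\tilde J_{A,r}$ is scalar upper triangular and hence (since $\dim \tilde J_{A,r}$ is necessarily even, because conjugate pairs contribute an even number of real dimensions) trivially upper $2\times 2$ block triangular when grouped into pairs.

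Next, I would apply the symplectic QR decomposition of Lemma~\ref{lem:sym-GS} to $\tilde V$. The hypotheses of that lemma require $N_1,\ldots,N_n$ to be nonsingular. Hypothesis~(ii) gives this for $N_1,\ldots,N_{n-1}$, while $N_n = \tilde V^{\top}\mathbb{J}_n \tilde V$ is automatically invertible since $\tilde V$ and $\mathbb{J}_n$ are invertible. Thus $\tilde V = S_1 Y$ with $S_1$ symplectic and $Y$ invertible upper $2\times 2$ block triangular. Substituting,
\begin{equation*}
S_1^{-1} A S_1 \;=\; Y \tilde J_A Y^{-1}.
\end{equation*}
Using the standard facts that the set of upper $2\times 2$ block triangular matrices is closed under products and that the inverse of an invertible such matrix is again upper $2\times 2$ block triangular (its $2\times 2$ diagonal blocks being invertible by nonsingularity of $Y$), I conclude that $U_0 := S_1^{-1} A S_1$ is upper $2\times 2$ block triangular.

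Finally, I would conjugate by the pair-reversal $P$ to convert ``upper'' to ``lower.'' Writing $P$ as an $n\times n$ block matrix with $I_2$ on the anti-diagonal, one checks directly that $P\mathbb{J}_n P^{\top} = \mathbb{J}_n$ (since $\mathbb{J}_n$ has the block $\mathbb{J}$ in every diagonal slot, any pair-permutation preserves it); thus $P$ is symplectic, $P^{-1} = P^{\top} = P$, and $S := PS_1^{-1}$ is symplectic. For any upper $2\times 2$ block triangular $M$, the $(i,j)$ block of $PMP^{-1}$ equals the $(n{+}1{-}i,\,n{+}1{-}j)$ block of $M$, which vanishes for $i<j$; hence $U := PU_0 P^{-1}$ is lower $2\times 2$ block triangular. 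Then $SAS^{-1} = P S_1^{-1} A S_1 P^{-1} = P U_0 P^{-1} = U$, giving $A = S^{-1} U S$ as required.

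There is no genuine obstacle: once Lemma~\ref{lem:sym-GS} is available and the hypothesis on the Jordan ordering is in place, the argument is essentially bookkeeping. The only subtle point to justify carefully is the invocation of Lemma~\ref{lem:sym-GS} at $\tilde V$ — in particular, noting that the ``missing'' top-level condition $N_n$ is free because $\tilde V$ is invertible — together with the verification that the specific pair-reversal permutation is simultaneously symplectic and effects the desired transposition of the triangular structure.
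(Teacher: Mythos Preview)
Your proposal is correct and follows essentially the same approach as the paper's own proof: apply Lemma~\ref{lem:sym-GS} to $\tilde V$ to get $S_1^{-1}AS_1 = Y\tilde J_A Y^{-1}$ upper $2\times 2$ block triangular, then conjugate by the symplectic pair-reversal permutation $P$ to flip it to lower block triangular form with $S=PS_1^{-1}$. Your write-up is in fact slightly more explicit than the paper's on two minor points (why $N_n$ need not be assumed, and why $P$ is symplectic and involutive), but the argument is the same.
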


\begin{remark}
Notice that since $A$ and $\tilde J_{A,c}$ have even dimensions, so does $\tilde J_{A,r}$. One can always choose a real Jordan canonical form of $A$ to be of the form $\tilde J_A= {\rm diag}(\tilde J_{A,r}, \tilde J_{A,c})$, which is upper $2 \times2$ block triangular.
\end{remark}

\begin{proof}
Let $A= \tilde V \tilde J_A \tilde V^{-1}$, with $\tilde J_A$ and $\tilde V$ as given in the lemma. By conditions (i) and (ii), using Lemma \ref{lem:sym-GS} we can construct a symplectic matrix $S_1$ such that $\tilde V =S_1 Y$, with $Y$  real invertible upper $2 \times 2$ block triangular as given in the lemma. We can thus  write $A= \tilde V  \tilde J_A \tilde V^{-1} = S_1 Y  \tilde J_A Y^{-1} S_1^{-1}$.
Moreover, since $\tilde J_A= {\rm diag}(\tilde J_{A,r}, \tilde J_{A,c})$, $Y$, and $Y^{-1}$ are all upper $2 \times 2$ block triangular, the product $Y \tilde J_A Y^{-1}$ is also upper $2 \times 2$ block triangular.  Let $P \in \mathbb{R}^{2n \times 2n}$ be the  permutation matrix defined in the lemma.  Notice that, by its definition, the permutation matrix $P$ is symplectic, and that $P Z P^{\top}$ is $2 \times 2$ {\em lower} block triangular whenever $Z$ is  {\em upper} $2\times 2$ block triangular. It follows from these observations that  $PY \tilde J_A Y^{-1}P^{\top}$ is lower $2\times 2$ block triangular. Therefore, we conclude that $SAS^{-1}$ with $S=PS_1^{-1}$ is a lower $2\times 2$ block triangular matrix, since  $SAS^{-1}= PY \tilde J_A Y^{-1} P^{\top}$.  Additionally, notice that $S$ is symplectic since $S_1^{-1}$ (the inverse of a symplectic matrix) and $P$ are symplectic. $\hfill \Box$
\end{proof}

A direct consequence of Lemma \ref{lem:sym-Schur-decomposition} is the existence of the cascade realization of the transfer function of a linear quantum stochastic system when the conditions of the lemma are satisfied.
\begin{theorem}
\label{thm:cascade-real} Let $G=(A,B,C,D)$ be a physically realizable $n$ degree of freedom linear quantum stochastic system. If there exists a matrix $\tilde V$ associated to the $2n \times 2n$ matrix $A$ satisfying the conditions of Lemma \ref{lem:sym-Schur-decomposition} then there exists  a symplectic matrix $S$ such that the transformed system $(SAS^{-1}, SB, CS^{-1},D)$ is physically realizable with $SAS^{-1}$  lower $2 \times 2$ block triangular, i.e., $\Xi_G(s)$ has a pure cascade realization.
\end{theorem}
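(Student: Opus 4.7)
The plan is simply to assemble three previously-established pieces: Lemma \ref{lem:sym-Schur-decomposition} to produce the symplectic change of basis, the fact that physical realizability is preserved under symplectic similarity, and Theorem \ref{th:casc-struc} to convert the resulting block-triangular structure into an explicit cascade. First, since by hypothesis a matrix $\tilde V$ satisfying conditions (i) and (ii) of Lemma \ref{lem:sym-Schur-decomposition} is available, I would apply that lemma directly to obtain a symplectic matrix $S$ (namely $S = PS_1^{-1}$ in the notation of that lemma) for which $SAS^{-1}$ is lower $2\times 2$ block triangular. That delivers the structural part of the conclusion.

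Next I would verify that the transformed quartet $(SAS^{-1},SB,CS^{-1},D)$ satisfies the physical realizability constraints \eqref{eq:pr-1}--\eqref{eq:pr-3}. Constraint \eqref{eq:pr-3} holds trivially since $D$ is untouched. For \eqref{eq:pr-1} and \eqref{eq:pr-2}, the key identity is $S^{-1}\mathbb{J}_n = \mathbb{J}_n S^{\top}$ (equivalently $S\mathbb{J}_n = \mathbb{J}_n S^{-\top}$), obtained by rearranging the defining symplectic relation $S\mathbb{J}_n S^{\top}=\mathbb{J}_n$. A one-line substitution then lets one factor $S$ out on the left and $S^{\top}$ on the right of each transformed constraint, reducing it to $S$ times the corresponding original (vanishing) constraint for $G$ times $S^{\top}$, which is zero. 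This establishes that the transformed system is physically realizable.

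Finally, since $SAS^{-1}$ is already in lower $2\times 2$ block triangular form, the transformed system satisfies the hypothesis of Theorem \ref{th:casc-struc} with the trivial symplectic permutation $P=I_{2n}$. That theorem then exhibits an explicit pure cascade realization $G_n \triangleleft G_{n-1} \triangleleft \cdots \triangleleft G_1$ of one-degree-of-freedom systems for the transformed quartet. Because similarity transformation leaves the transfer function invariant, this cascade realizes $\Xi_G(s)$ itself, as claimed. I do not expect any real obstacle in this argument; the only point that requires a touch of care is the invariance-under-symplectic-similarity calculation for \eqref{eq:pr-1}--\eqref{eq:pr-2}, but that is pure bookkeeping rather than a substantive difficulty.
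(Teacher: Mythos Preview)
Your proposal is correct and follows essentially the same route as the paper: invoke Lemma \ref{lem:sym-Schur-decomposition} to obtain a symplectic $S$ with $SAS^{-1}$ lower $2\times 2$ block triangular, then appeal to Theorem \ref{th:casc-struc}. Your explicit check that the physical realizability constraints \eqref{eq:pr-1}--\eqref{eq:pr-3} are preserved under symplectic similarity is a welcome addition of detail; the paper omits this in the proof itself because it has already been noted earlier in Section~\ref{sec:pure-cascading} that $G_T=(TAT^{-1},TB,CT^{-1},D)$ is physically realizable for any symplectic $T$.
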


\begin{proof}
Let $S$ be as in  Lemma \ref{lem:sym-Schur-decomposition}, then $A=SAS^{-1}$ is lower $2 \times 2$ block triangular. Therefore, $\Xi_G(s)$ has a pure cascade realization by Theorem \ref{th:casc-struc}. $\hfill \Box$
\end{proof}

We emphasize that fulfillment of  the full rankness conditions on $\tilde N_1$, $\tilde N_2$, $\ldots$, $\tilde N_{n-1}$ depends on the choice of the matrix $\tilde V$ which transforms  $A$  into its real Jordan canonical form (which is not unique). For some choices of $\tilde V$ the full rankness conditions may fail to hold and thus a pure cascade realization of the transfer function cannot be obtained.

Let us call as {\em admissible} all real $2n \times 2n$ matrices $A$ satisfying the conditions of Lemma  \ref{lem:sym-Schur-decomposition}, and refer to those that do not as {\em non-admissible}. The following examples illustrate some samples of  non-admissible matrices that cannot meet  the conditions of  Lemma  \ref{lem:sym-Schur-decomposition}.

\begin{example}
\label{ex:non-admissible-1} Consider the matrix
$$
A=\left[\begin{array}{cccc}   -1 & 0 & 0 & -1 \\  0 & -1 & 0 & 0 \\ -1 & 0 & -1 & 0 \\ 0 & -1 & 0 & -1 \end{array}\right],
$$
which has the real Jordan decomposition $A=\tilde V \tilde J_A \tilde V^{-1}$ with (following from Example \ref{eg:rank-fail})
$$
\tilde V =\left[\begin{array}{cccc} 0 & 1 & 0 & 0 \\ 0 & 0 & 0 & 1 \\ -1 & 0 & 0 & 0 \\  0 & 0 & -1 & 0 \end{array}  \right],\;
\tilde J_A = \left[\begin{array}{cccc} -1 & 1 & 0 & 0 \\ 0 & -1 & 1 & 0 \\ 0 & 0 & -1 & 1 \\ 0 & 0 & 0 & -1 \end{array} \right].
$$
It can be easily inspected that for any choice of permutation matrix $P$ such that $ \tilde V P$ satisfies the conditions of Theorem \ref{thm:iff-sym-GS}, one will find that $P \tilde J_A P^{\top}$ will not be upper $2 \times 2$  block triangular. This matrix $A$ is therefore non-admissible.
\end{example}

\begin{example}
\label{ex:non-admissible-2} Consider the matrix
$$
A=\left[\begin{array}{cccc}   -2 & 0 & 0 & 0 \\  0 & -3 & 0 & 4 \\ 0 & 0 & -1 & 0 \\ 0 & -4 & 0 & -3 \end{array}\right],
$$
which has the real Jordan decomposition $A=\tilde V \tilde J_A \tilde V^{-1}$ with
$$
\tilde V =\left[\begin{array}{cccc} 0 & 1 & 0 & 0 \\ 0 & 0 & 0 & 1 \\ -1 & 0 & 0 & 0 \\  0 & 0 & -1 & 0 \end{array}  \right],\;
\tilde J_A = \left[\begin{array}{cccc} -1 & 0 & 0 & 0 \\ 0 & -2 & 0 & 0 \\ 0 & 0 & -3 & 4 \\ 0 & 0 & -4 & -3 \end{array} \right].
$$
As with Example \ref{ex:non-admissible-2} it can be verified that  for any choice of permutation matrix $P$ such that $ \tilde V P$ satisfies the conditions of Theorem \ref{thm:iff-sym-GS}, one will find that $P \tilde J_A P^{\top}$ will not be upper $2 \times 2$  block triangular. Thus $A$ is also non-admissible.
\end{example}

We observe the following:
\begin{enumerate}
\item All  diagonalizable matrices in $\mathbb{R}^{2n \times 2n}$ (including all symmetric matrices) with only real eigenvalues are admissible. Example  \ref{ex:NOPA-cascade} to be given below involves this type of admissible matrix.

\item Non-admissible matrices in $\mathbb{R}^{2n \times 2n}$ include the following cases:
\begin{description}
\item[(i)] The matrix has a real repeated eigenvalue with geometric multiplicity less than its algebraic multiplicity, and there exist two mutually skew-orthogonal real basis vectors $v_1$ and $v_2$ for the invariant subspace of $\mathbb{R}^{2n}$ associated with that eigenvalue. For some matrices  of this type it is not possible to permute the columns of $\tilde V$ and the corresponding rows and columns of $\tilde J_A$ to transform them into  admissible matrices, as illustrated in  Example \ref{ex:non-admissible-1}.

\item[(ii)] The matrix has a pair of conjugate complex eigenvalues $\lambda$ and $\lambda^*$ (not necessarily repeated) with a corresponding pair of  conjugate eigenvectors or generalized eigenvectors $v$ and $v^{\#}$ such that the real vectors $v+v^{\#}$ and $-\imath v + \imath v^{\#}$ are mutually skew-orthogonal.  Example \ref{ex:non-admissible-2} illustrates an instance of a non-admissible matrix with this property.
\end{description}
\end{enumerate}

The non-admissibility of a real $2n \times 2n$ matrix  entails rather particular properties  that are unlikely to be possessed by typical matrices. This suggests that admissible real $2n \times 2n$ matrices are generic in the set of all real $2n \times 2n$ matrices. Generic is in the sense that the set of admissible matrices contains an open and dense subset of $\mathbb{R}^{2n \times 2n}$. This is indeed the case  and we state it as the following theorem, with the proof being deferred to the appendix.

\begin{theorem}
\label{thm:admissibility-is-generic} The set of admissible $2n \times 2n$ matrices is generic in  $\mathbb{R}^{2n \times 2n}$.
\end{theorem}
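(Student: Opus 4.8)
The plan is to show that the set of admissible matrices contains an open and dense subset of $\mathbb{R}^{2n\times 2n}$ by working with the generic stratum where $A$ has $2n$ distinct eigenvalues. First I would recall that the set $\mathcal{D}\subset\mathbb{R}^{2n\times 2n}$ of matrices with $2n$ distinct eigenvalues is open and dense (its complement is the zero set of the discriminant of the characteristic polynomial, a nontrivial polynomial in the entries of $A$). On $\mathcal{D}$, condition (i) of Lemma \ref{lem:sym-Schur-decomposition} is automatically satisfiable: a diagonalizable matrix is always similar, via a real matrix $\tilde V$ whose columns are (real or paired-real) eigenvectors, to a real Jordan form $\tilde J_A={\rm diag}(\tilde J_{A,r},\tilde J_{A,c})$ which is block diagonal with $1\times 1$ real blocks and $2\times 2$ rotation-type blocks $C_k$, hence upper $2\times 2$ block triangular. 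So the whole content is condition (ii): the full-rankness of $\tilde N_1,\ldots,\tilde N_{n-1}$ for \emph{some} admissible choice of $\tilde V$.

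Next I would parametrize the freedom in $\tilde V$. For a matrix in $\mathcal{D}$ with, say, $r$ real eigenvalues and $n-\tfrac{r}{2}$... — more carefully, with $2p$ real eigenvalues and $q$ conjugate pairs, $2p+2q=2n$ — the matrix $\tilde V$ is determined up to right multiplication by a block-diagonal matrix whose blocks are: a nonzero real scalar for each real eigenvector, and a nonzero $2\times 2$ real matrix commuting with $C_k$ (i.e.\ of the form $\left[\begin{smallmatrix}a&b\\-b&a\end{smallmatrix}\right]$, equivalently a nonzero complex scalar) for each complex pair. Call this group $\mathcal{G}$; it is a connected algebraic group of positive dimension. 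Moreover, by the remark after the lemma, one is also free to permute the order of the Jordan blocks (equivalently, reorder the columns of $\tilde V$ in blocks of size $1$ or $2$), as long as the real blocks still precede the complex ones. The strategy is: fix one base choice $\tilde V_0(A)$ depending (locally) analytically on $A$ on a neighborhood in $\mathcal{D}$, and show that for generic $A$ there is a choice of block-permutation $P$ and of $g\in\mathcal{G}$ so that $\tilde V = \tilde V_0(A)\,P\,g$ makes all the $\tilde N_k$ nonsingular.

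The key observation is that $\tilde N_k = \big[\tilde v_1\ \cdots\ \tilde v_{2k}\big]^\top \mathbb{J}_n \big[\tilde v_1\ \cdots\ \tilde v_{2k}\big]$ is the leading $2k\times 2k$ principal submatrix of the skew-symmetric matrix $\tilde V^\top \mathbb{J}_n \tilde V$, so I want the sequence of leading even-dimensional principal minors of $\tilde V^\top\mathbb{J}_n\tilde V$ to be nonzero. For fixed $A$ and a fixed column ordering, each such minor is a (non-identically-zero, as one can check on the block-diagonal examples) rational — in fact polynomial — function of the parameters $g\in\mathcal{G}$; hence it is nonzero for $g$ outside a proper subvariety, and since $\mathcal{G}$ is irreducible, a single generic $g$ makes all $n-1$ minors nonzero \emph{provided each minor is not identically zero in $g$}. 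So the real crux is to rule out the degenerate case in which, no matter how the blocks are permuted and no matter the choice of $g$, some leading principal minor vanishes identically — this is precisely what happens for the non-admissible matrices in Examples \ref{ex:non-admissible-1} and \ref{ex:non-admissible-2}, where a whole $2$-dimensional (real or complexified) eigen-subspace is isotropic for $\mathbb{J}_n$. I would show this identical vanishing forces an algebraic condition on $A$ itself — e.g.\ that two eigenvectors span a $\mathbb{J}_n$-isotropic plane, which pins $A$ to a proper subvariety of $\mathcal{D}$ — so that off that subvariety one can always find a valid $(P,g)$. Combined with a local-analyticity/openness argument (nonvanishing of finitely many minors is an open condition, and once one valid $\tilde V$ exists at $A_0$ it persists on a neighborhood with $\tilde V$ varying continuously), this yields that the admissible set contains the open dense set $\mathcal{D}\setminus Z$ for a proper subvariety $Z$, proving genericity.

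I expect the main obstacle to be exactly this last point: cleanly showing that the \emph{identical} vanishing (over all block-permutations $P$ and all $g\in\mathcal{G}$) of some leading principal Pfaffian-type minor of $\tilde V^\top\mathbb{J}_n\tilde V$ can be expressed as a polynomial constraint on the entries of $A$ which is not satisfied identically on $\mathcal{D}$. The natural way to do this is an explicit witness: exhibit, for each $n$, one matrix $A^\ast\in\mathcal{D}$ (e.g.\ a diagonal matrix with distinct real eigenvalues, which by observation 1 in the text is admissible) for which all the $\tilde N_k$ can be made full rank; non-identical-vanishing of each relevant minor then follows, and standard arguments give openness and density. Handling the bookkeeping of how complex eigenvector pairs contribute $2\times 2$ blocks to $\tilde V^\top\mathbb{J}_n\tilde V$, and making sure the permutation freedom is used correctly so that $\tilde J_{A,r}$ still precedes $\tilde J_{A,c}$, is the fiddly part, but it is routine once the witness is in hand.
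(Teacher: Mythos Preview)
Your approach is genuinely different from the paper's and is plausible in outline, but it has a real gap at exactly the point you flag as the main obstacle. Two specific issues. First, the action of $g\in\mathcal G$ is a red herring: since $g$ is block-diagonal with blocks aligned to the $2$-by-$2$ column grouping, one has $\det\tilde N_k(\tilde V_0 Pg)=c_k(g)^2\,\det\tilde N_k(\tilde V_0 P)$ for a nonzero scalar $c_k(g)$, so varying $g$ never changes whether a given minor vanishes; only the permutation $P$ matters. Second, and more seriously, the minors $\det\tilde N_k$ are only real-analytic (not polynomial) in the entries of $A$, and the set $\mathcal D$ of matrices with simple spectrum is \emph{disconnected}: its components are indexed by the number of real eigenvalues, since two distinct real eigenvalues cannot be deformed into a complex-conjugate pair without first colliding. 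A single witness $A^\ast$ with all real eigenvalues therefore says nothing about non-identical-vanishing on the components containing complex pairs --- and Example~\ref{ex:non-admissible-2} shows those components really do contain non-admissible points. You would need a witness in \emph{each} component (a block-diagonal $A$ with the appropriate mix of $1\times1$ real and $2\times 2$ rotation blocks, giving $\tilde V=I$ and $\tilde V^\top\mathbb{J}_n\tilde V=\mathbb{J}_n$, does the job), or else actually establish the ``polynomial constraint on the entries of $A$'' that you hope for; neither is supplied.

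The paper sidesteps all of this by decoupling the problem. It first proves, via the submersion $F(V)=V^\top\mathbb{J}_n V$ and the Local Submersion Theorem, that the set $\mathfrak V_{2n}$ of invertible $V$ for which all leading even principal minors of $V^\top\mathbb{J}_n V$ are nonzero is open and dense among invertible $2n\times 2n$ real matrices. Then for density of admissible matrices it does \emph{not} search for a good $\tilde V$ among the eigendecompositions of a fixed $A$: given a simple $\bar A=\bar V J_{\bar A}\bar V^{-1}$, it perturbs $\bar V$ to a nearby $\tilde V\in\mathfrak V_{2n}$ and defines a \emph{new} matrix $\tilde A=\tilde V J_{\bar A}\tilde V^{-1}$, admissible by construction and close to $\bar A$. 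Allowing $A$ itself to move, rather than only the gauge $(P,g)$, is precisely what makes the connected-components obstruction disappear.
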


Thus,   generic  matrices in $\mathbb{R}^{2n \times 2n}$  have a symplectic Schur decomposition and the transfer function $\Xi_G(s)$ of a generic physically realizable linear quantum stochastic system  $G=(A,B,C,D)$ has a pure cascade realization that can be  explicitly determined using Lemma \ref{lem:sym-Schur-decomposition} and Theorem \ref{thm:cascade-real}. We conclude this section by applying the results obtained herein in an example that demonstrates an equivalent realization of the transfer function of a nondegenerate optical parametric amplifier (NOPA) by a cascade of two  degenerate parametric amplifiers (DPAs) equipped with an additional transmissive mirror.

\begin{example}
\label{ex:NOPA-cascade} Consider a NOPA with two modes $a_j = \frac{1}{2}(q_j+\imath p_j)$, $j=1,2$, satisfying the canonical commutation relations $[a_j,a_k^*]=\delta_{jk}$ and $[a_j,a_k]=0$. The operators describing the system is $H=\frac{\imath \epsilon}{2}(a_1^*a_2^*-a_1 a_2)$, $L=[\begin{array}{cc} \sqrt{\gamma} a_1 & \sqrt{\gamma} a_2 \end{array}]^{\top}$, and $S=I_2$. We take $\gamma=7.2 \times 10^7$ and $\epsilon = 0.6 \gamma = 4.32 \times 10^7$, values that can be realized  in a tabletop optical experiment, see, e.g., the experimental work \cite{IYYYF11} based on the proposals in \cite{YK03b,GW09}. The $A,B,C,D$ matrices for the NOPA are:
\begin{eqnarray*}
A &=& 10^7 \left[\begin{array}{cccc} -3.6 & 0 & 2.16 & 0 \\
0 & -3.6 & 0 & -2.16 \\
2.16 & 0 & -3.6 & 0\\
0 & -2.16 & 0 & -3.6
\end{array}\right]; \\
B&=&-8.4853 \times 10^3 I_4;\;C = 8.4853 \times 10^3 I_4;\; D= I_4.
\end{eqnarray*}
We can choose the matrix $\tilde V$ in Theorem \ref{thm:cascade-real} to be
$$
\tilde V = \left[\begin{array}{cccc} 0.7071 & 0 & 0  &  0.7071 \\
0 & -0.7071 & 0.7071 & 0\\
-0.7071 & 0 & 0 & 0.7071\\
0 & 0.7071 & 0.7071 & 0 \end{array}\right],
$$
corresponding to the (real) Jordan canonical form
$$
 \tilde J_A = \tilde V^{-1} A \tilde V = 10^7 {\rm diag}(-5.76,-1.44,-5.76,-1.44).
$$
Using Lemma \ref{lem:sym-GS}, we compute the symplectic matrix  $S_1$ and upper $2 \times 2$ block triangular matrix $Y$ as
\begin{eqnarray*}
S_1 &=& \left[\begin{array}{cccc} 0.7071 & 0 & 0 & -0.7071 \\
0 & 0.7071 &  0.7071 & 0 \\
-0.7071 & 0 & 0 & -0.7071 \\
0 & -0.7071 & 0.7071 & 0
\end{array}\right]; \\
Y&=& {\rm diag}(1,-1,1,-1).
\end{eqnarray*}

The required symplectic transformation matrix from Theorem \ref{thm:cascade-real} is $S=PS_1^{-1}$, and a cascade realization of the transfer function of the NOPA is $G_1=(A_1,B_1,C_1,D_1)=(SAS^{-1},SB,CS^{-1},D)$ with
\begin{eqnarray*}
A_1 &=& 10^7 \left[\begin{array}{cccc} -5.76 & 0 & 0 & 0 \\
0 & -1.44 & 0 & 0 \\
0 & 0 & -5.76 & 0\\
0 & 0 & 0 & -1.44
\end{array}\right]; \\
B_1&=&10^3 \left[\begin{array}{cccc}  0 & -6 & 0 & -6 \\
6 & 0 & 6 & 0 \\
-6 & 0 & 6 & 0\\
0 & -6 & 0 & 6
\end{array}\right]; \;
C_1 = 10^3 \left[\begin{array}{cccc} 0 & -6 & 6 & 0 \\
6 & 0 & 0 & 6 \\
0 & -6 & -6 & 0\\
6 & 0 & 0 & -6
\end{array}\right]; \\
D_1&=& I_4.
\end{eqnarray*}
The cascade realization $G_1$ can be decomposed as the cascade $G_1 =G_{12} \triangleleft G_{11}$, with
$$
G_{11}=\left(I_2, 10^3 \left[\begin{array}{cc} 3\imath & -3 \\ 3\imath & -3\end{array} \right]\left[\begin{array}{c} q_1 \\ p_1 \end{array} \right],-5.4 \times 10^6 (q_1p_1+p_1 q_1) \right),$$
and
$$
G_{12}=\left(I_2, 10^3 \left[\begin{array}{cc} 3 & 3\imath \\ -3 & -3\imath \end{array} \right]\left[\begin{array}{c} q_2 \\ p_2 \end{array} \right],-5.4  \times 10^6 (q_2p_2+p_2 q_2) \right).$$
Each of $G_{11}$ and $G_{12}$ can be realized  as a DPA with  two transmissive mirrors rather than one; see \cite{NJD08} for details of the realization of $G_{11}$ and $G_{12}$. Note that the pump amplitude for each NOPA in the cascade realization is $4 \times 5.4 \times 10^6 = 2.16\times 10^7$. Therefore, remarkably,  the cascade realization $G_{12} \triangleleft G_{12}$ requires less total pump power to realize than the original NOPA, i.e., $2 \times (2.16\times 10^7)^2$ in the cascade compared  to $(4.32 \times 10^7)^2$ in the original, i.e., half the pump power. So, with the cascade realization one obtains a more power efficient realization of the same transfer function which yields the same amount of two-mode squeezing in the two output beams.

Finally, note that if $\tilde V$ had been chosen differently from the one  above, for instance, as
$$
\tilde V = \left[\begin{array}{cccc} 0.7071 & 0 & 0.7071  & 0 \\
0 & 0.7071 & 0 & -0.7071 \\
-0.7071 & 0 & 0.7071 &  0\\
0 & 0.7071 & 0 & 0.7071 \end{array}\right],
$$
corresponding to
$$
 \tilde J_A = \tilde V^{-1} A \tilde V = 10^7 {\rm diag}(-5.76,-5.76,-1.44,-1.44),
$$
then it may be readily inspected that the full rankness conditions of Lemma \ref{lem:sym-Schur-decomposition} are not satisfied, hence  this choice of $\tilde V$ cannot lead to a pure cascade realization of the NOPA.
\end{example}

\section{Conclusion}
\label{sec:conclu} In this paper we have generalized the ideas and results in \cite{Pet12,Nurd10b}, that focus on the special class of completely passive linear quantum stochastic systems, to show that the transfer function of generic linear quantum stochastic systems, which includes a large generic class of active systems, can be realized by pure cascading. The proof is constructive as the cascade realization, when it exists, can be explicitly computed.  This is of practical importance as it will allow a simpler realization of a large class of linear quantum stochastic systems as, say, coherent feedback controllers or quantum optical filters. Numerical examples have been provided to illustrate the results of the paper. In one example, it is shown that the transfer function of a nondegenerate optical parametric amplifier has a realization as the cascade of two degenerate optical parametric amplifiers having an additional outcoupling mirror, which operates for only  half of the pump power required by the nondegenerate optical parametric amplifier.

\vspace{0.5cm}

\noindent \textbf{Acknowledgement.} Contributions: HN developed the symplectic QR and Schur decomposition algorithms, associated results and Example \ref{ex:NOPA-cascade}, SG and IP proved the genericity of admissible matrices in discussion with HN. The authors thank the reviewers and Associate Editor for their constructive and helpful comments  on this paper.

\appendices
 
\section*{Proof of Theorem \ref{thm:admissibility-is-generic}}

\newcommand{\diag}{\ensuremath{\mathrm{diag}}}

Let $M_{2n}(\mathbb{R})$ denote the set of $2n \times 2n$ real matrices, $\bar{M}_{2n}(\mathbb{R})$ denote the subset of full rank (i.e. invertible) matrices, and $\bar{M}_{2n,s}(\mathbb{R})$ the subset of those matrices in $\bar{M}_{2n}(\mathbb{R})$ that are simple (recall that simple matrices are square matrices with simple eigenvalues, i.e., all eigenvalues are distinct).  $\bar{M}_{2n}(\mathbb{R})$ and $\bar{M}_{2n,s}(\mathbb{R})$ are generic (open and dense) in $M_{2n}(\mathbb{R})$, and in fact they are (non-connected) manifolds of dimension $2n \times 2n=4n^2$. Similarly, let $A_{2n}(\mathbb{R})$ denote the set of real skew-symmetric $2n \times 2n$ matrices, and $\bar{A}_{2n}(\mathbb{R})$ the subset of full rank such matrices. $\bar{A}_{2n}(\mathbb{R})$ is generic (open and dense) in $A_{2n}(\mathbb{R})$, and in fact it is a (non-connected) manifold of dimension $(2n \times 2n -2n)/2=2n^2 -n$. For a matrix $X$ in $A_{2n}(\mathbb{R})$, we define the principal submatrices $X^{(i)}$  as the upper left corner $2i \times 2i$ submatrices of $X$ (i.e., the sub-matrices formed by rows and columns $1$ to $2i$) from $X^{(1)}$ up to $X^{(n)}=X$. Let $\tilde{A}_{2n}(\mathbb{R})$ be the subset of $\bar{A}_{2n}(\mathbb{R})$ containing matrices $X$ with the property that all $X^{(i)}, i=1,\ldots,n-1$, are full rank. Finally, let $\tilde{M}_{2n,s}(\mathbb{R})$ denote the subset of matrices in $\bar{M}_{2n,s}(\mathbb{R})$ which are admissible. This means that they have the following properties: (i) there is a real invertible $2n \times 2n$ matrix $V$ that puts them in a real Jordan canonical form $J_A$ ($A=V J_A V^{-1}$), which is block-diagonal, with the $1 \times 1$ real blocks before the $2 \times 2$ complex blocks (recall that $A$ is simple, so it has no real Jordan blocks of dimension higher than two), and (ii) $V^{\top} \mathbb{J}_n V \in \tilde{A}_{2n}(\mathbb{R})$. The proof uses arguments inspired by the proof of genericity of simple matrices in the set of all real square matrices of a given dimension, from \cite[Section 5.6]{HSD04}. Also, it relies heavily on methods and results from differential topology. A standard reference for these methods and results, along with terminology and notation, is the book \cite{GP74}. Finally, a crucial argument uses Theorem 5.16 of \cite[Chapter II]{Kato95} and its proof. The proof uses Lemma \ref{Eigenstructure deformation} and Proposition \ref{Generic Property}, and Lemma \ref{J-Grammian function} is needed in the proof of Proposition \ref{Generic Property}. All these results will be proved later on in this appendix.
\begin{lemma}\label{Eigenstructure deformation}
Let $T \in \mathbb{R}^{N \times N}$ be a simple matrix with nonzero eigenvalues. There is a neighborhood of $T$ in $\mathbb{R}^{N \times N}$ such that, every matrix in this neighborhood has eigenvectors and eigenspaces (the latter represented by  projection operators onto the respective eigenspaces) which are continuous functions of their entries, and moreover, their eigenvalues are of the same type as those of $T$.
\end{lemma}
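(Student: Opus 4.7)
The plan is to combine the classical continuity of roots of polynomials under coefficient perturbation with the Riesz--Kato spectral projection machinery to obtain continuity of eigenvalues, spectral projections, and eigenvectors on a small enough neighborhood of $T$. First I would observe that the coefficients of the characteristic polynomial $\chi_T(\lambda)=\det(\lambda I-T)$ are polynomial (hence continuous) functions of the entries of $T$, and that the multiset of roots of a monic polynomial depends continuously on its coefficients in the Hausdorff sense. Because $T$ is simple with nonzero eigenvalues, I can choose $r>0$ strictly less than half the minimum distance between distinct eigenvalues of $T$ and strictly less than the minimum modulus of any eigenvalue of $T$. Then there exists a neighborhood $U$ of $T$ in $\mathbb{R}^{N\times N}$ such that for every $T'\in U$ each open disk of radius $r$ centered at an eigenvalue of $T$ contains exactly one eigenvalue of $T'$, counted with algebraic multiplicity, all of which are nonzero. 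In particular each $T'\in U$ is itself simple with nonzero eigenvalues.

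Next I would show that the \emph{type} of each eigenvalue (real versus non-real) is preserved on $U$. For a real eigenvalue $\lambda$ of $T$, the disk of radius $r$ around $\lambda$ is symmetric about the real axis; if the eigenvalue $\lambda'$ of $T'$ lying in this disk were non-real, then $\overline{\lambda'}$ would also be an eigenvalue of the real matrix $T'$ lying in the same disk, contradicting the one-eigenvalue-per-disk property. Hence $\lambda'\in\mathbb{R}$. For a conjugate pair of non-real eigenvalues $\lambda,\overline\lambda$ of $T$ I would additionally shrink $r$ so that the two disks neither intersect the real axis nor each other; the eigenvalue $\lambda'$ of $T'$ in the disk around $\lambda$ is then automatically non-real, and $\overline{\lambda'}$ must be the eigenvalue of $T'$ in the disk around $\overline\lambda$.

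For continuity of the spectral projections I would appeal to the Dunford--Kato functional calculus. For each eigenvalue $\lambda$ of $T$ I would choose a positively oriented Jordan curve $\gamma_\lambda$ contained in the disk of radius $r$ around $\lambda$ and enclosing only $\lambda$ among the eigenvalues of $T$. By the first step, for $T'$ in a sufficiently small neighborhood $U'\subset U$ of $T$ the same contour encloses exactly one eigenvalue of $T'$, namely the corresponding perturbed $\lambda'$, so
$$
P_\lambda(T')\;=\;\frac{1}{2\pi\imath}\oint_{\gamma_\lambda}(zI-T')^{-1}\,dz
$$
is the spectral projection onto the one-dimensional eigenspace of $T'$ for $\lambda'$. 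Joint continuity of the resolvent $(z,T')\mapsto(zI-T')^{-1}$ on the compact set $\gamma_\lambda\times\overline{U'}$ (after shrinking $U'$ if necessary so that no $z\in\gamma_\lambda$ is an eigenvalue of any $T'\in\overline{U'}$) yields continuity of $P_\lambda(T')$ in $T'$; this is precisely the content of Theorem 5.16 of \cite[Chapter II]{Kato95}. For a complex conjugate pair $(\lambda,\overline\lambda)$ the sum $P_\lambda(T')+P_{\overline\lambda}(T')$ is a real-valued continuous projection onto the associated two-dimensional real invariant subspace.

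Finally, to obtain continuous eigenvectors I would, for each eigenvalue $\lambda$ of $T$, fix a vector $u_\lambda$ with $P_\lambda(T)u_\lambda\neq 0$; by continuity of $P_\lambda$ the vector $v_\lambda(T'):=P_\lambda(T')u_\lambda$ remains nonzero on a further shrunk neighborhood of $T$ and is an eigenvector of $T'$ for $\lambda'$ depending continuously on $T'$ (and may be normalized by any continuous nonvanishing functional such as the Euclidean norm). For a complex pair the analogous choice is made in $\mathbb{C}^N$. The main obstacle, and the step where I would be most careful, is ensuring continuity of the spectral projections in operator norm rather than merely pointwise in the eigenvalues---a fact that is classical but requires the Dunford--Kato machinery rather than a bare-hands argument---and organizing the choice of contours and neighborhoods so that all of the above properties hold \emph{simultaneously} across all eigenvalues of $T$; once this bookkeeping is arranged the remaining steps are routine.
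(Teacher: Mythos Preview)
Your proposal is correct and follows essentially the same approach as the paper: both invoke the Kato spectral projection machinery (the paper cites Theorem~5.16 of \cite[Chapter II]{Kato95} directly, while you spell out the Dunford--Kato contour integral) to obtain continuity of eigenvalues and eigenprojections, and both argue type preservation by exploiting the conjugate symmetry of eigenvalues of real matrices together with the separation of simple eigenvalues. Your disk-based argument for type preservation and your explicit construction $v_\lambda(T')=P_\lambda(T')u_\lambda$ of continuous eigenvectors are more detailed than the paper's sketch, but the underlying ideas are the same.
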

\begin{lemma} \label{J-Grammian function}
Let $F:\bar{M}_{2n}(\mathbb{R})\rightarrow \bar{A}_{2n}(\mathbb{R})$ be defined by $F(V)=V^{\top} \mathbb{J}_n V$. Then, $F$ is onto, and a submersion (see  \cite[Section 1.4]{GP74} for terminology).
\end{lemma}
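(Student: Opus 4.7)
The plan is to establish the two claims separately. First I will show that $F$ is surjective by invoking the classical normal form theorem for nondegenerate real skew-symmetric bilinear forms. Then I will show that $F$ is a submersion by computing its differential directly and exhibiting an explicit right inverse at every point $V \in \bar{M}_{2n}(\mathbb{R})$.

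For surjectivity, fix any target $X \in \bar{A}_{2n}(\mathbb{R})$. Since $X$ defines a nondegenerate real skew-symmetric bilinear form $(u,v) \mapsto u^{\top} X v$ on $\mathbb{R}^{2n}$, the standard classification of such forms (equivalently, the existence of a symplectic basis) provides an invertible $P \in M_{2n}(\mathbb{R})$ with $P^{\top} X P = \mathbb{J}_n$. Setting $V = P^{-1} \in \bar{M}_{2n}(\mathbb{R})$ and rearranging yields $X = V^{\top} \mathbb{J}_n V = F(V)$. I note that Lemma \ref{lem:sym-GS} itself constructs such a symplectic basis explicitly whenever the associated $N_i$ are full rank, so one may also derive this by applying that lemma to a carefully chosen initial basis of $(\mathbb{R}^{2n}, X)$.

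For the submersion claim, since $\bar{M}_{2n}(\mathbb{R})$ is open in the vector space $M_{2n}(\mathbb{R})$, one has $T_V \bar{M}_{2n}(\mathbb{R}) = M_{2n}(\mathbb{R})$ for every $V$, and analogously $T_{F(V)} \bar{A}_{2n}(\mathbb{R}) = A_{2n}(\mathbb{R})$. A direct Fr\'echet differentiation gives
\[ dF_V(H) = H^{\top} \mathbb{J}_n V + V^{\top} \mathbb{J}_n H, \]
and transposing, together with $\mathbb{J}_n^{\top} = -\mathbb{J}_n$, confirms that $dF_V(H)$ is automatically skew-symmetric, so the differential does map into $A_{2n}(\mathbb{R})$ as required. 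To prove surjectivity onto $A_{2n}(\mathbb{R})$, I will exhibit an explicit preimage for an arbitrary $Y \in A_{2n}(\mathbb{R})$. Taking $H = -\tfrac{1}{2}\,\mathbb{J}_n V^{-\top} Y$ and using the identities $\mathbb{J}_n^{\top} = -\mathbb{J}_n$, $\mathbb{J}_n^2 = -I$, $V^{\top} V^{-\top} = I$, and $Y^{\top} = -Y$, a short calculation yields $H^{\top} \mathbb{J}_n V = \tfrac{1}{2} Y$ and $V^{\top} \mathbb{J}_n H = \tfrac{1}{2} Y$, so that $dF_V(H) = Y$. Since this formula for $H$ makes sense at every $V \in \bar{M}_{2n}(\mathbb{R})$, $F$ is a submersion everywhere on its domain.

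Of the two assertions, the surjectivity is the one with real mathematical content, resting on the classification of nondegenerate skew-symmetric forms, whereas the submersion claim reduces to a one-line algebraic verification once the differential is written down. I do not anticipate any genuine obstacle; the only subtlety is to keep straight that $\bar{M}_{2n}(\mathbb{R})$ and $\bar{A}_{2n}(\mathbb{R})$ are open in their respective ambient vector spaces, so that tangent spaces are identified with $M_{2n}(\mathbb{R})$ and $A_{2n}(\mathbb{R})$ and no submanifold constraint need be imposed.
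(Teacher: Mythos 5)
Your proposal is correct and takes essentially the same route as the paper: for surjectivity the paper simply makes your cited congruence explicit, writing $X=Q^{\top}\Lambda Q$ via the orthogonal normal form of a full-rank skew-symmetric matrix and setting $V=DQ$ so that $X=V^{\top}\mathbb{J}_n V$, and for the submersion it computes the same differential and solves $v^{\top}\mathbb{J}_n V+V^{\top}\mathbb{J}_n v=w$ by the substitution $v=\mathbb{J}_n(V^{\top})^{-1}\bar v$, whose particular solution $\bar v=-\tfrac12 w$ is exactly your $H=-\tfrac12\mathbb{J}_n V^{-\top}Y$. The only point the paper checks that you leave implicit is well-definedness, namely that $F(V)$ is full rank because $\det F(V)=(\det V)^2\neq 0$, which is immediate.
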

\begin{proposition} \label{Generic Property}
The set $\mathfrak{V}_{2n}$ of matrices $V \in \bar{M}_{2n}(\mathbb{R})$ such that $V^{\top} \mathbb{J}_n V \in \tilde{A}_{2n}(\mathbb{R})$, is open and dense.
\end{proposition}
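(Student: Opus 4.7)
The plan is to recognize $\mathfrak V_{2n}$ as the non-vanishing locus inside $\bar M_{2n}(\mathbb R)$ of a single real polynomial in the entries of $V$, and then invoke the elementary fact that a nontrivial polynomial on $\mathbb R^N$ has a zero set of empty interior.

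For each $i\in\{1,\ldots,n\}$, let $V_i\in\mathbb R^{2n\times 2i}$ denote the submatrix formed by the first $2i$ columns of $V$, and set $N_i(V)=V_i^{\top}\mathbb J_n V_i$. A direct computation shows that $N_i(V)$ is precisely the leading $2i\times 2i$ principal submatrix $F(V)^{(i)}$ of $F(V)=V^{\top}\mathbb J_n V$. Hence, by the definition of $\tilde A_{2n}(\mathbb R)$,
\[
\mathfrak V_{2n} \;=\; \{\, V\in\bar M_{2n}(\mathbb R) :\; N_i(V)\text{ is full rank for } i=1,\ldots,n-1 \,\},
\]
the requirement on $N_n(V)=F(V)$ being automatic for invertible $V$. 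Each $N_i(V)$ is a real skew-symmetric matrix of even dimension, so it is full rank if and only if its Pfaffian is nonzero, and $\mathrm{Pf}\, N_i(V)$ is a polynomial in the entries of $V$. Consequently, $\mathfrak V_{2n}$ is the locus in $\bar M_{2n}(\mathbb R)$ on which the single polynomial $p(V) = \prod_{i=1}^{n-1} \mathrm{Pf}\, N_i(V)$ does not vanish.

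Openness of $\mathfrak V_{2n}$ in $\bar M_{2n}(\mathbb R)$ is then immediate from continuity of $p$. For density, it suffices to verify that $p$ is not the zero polynomial on $M_{2n}(\mathbb R)$; evaluating at $V=I_{2n}$ gives $F(I_{2n})=\mathbb J_n$, whence $N_i(I_{2n})=\mathbb J_i$ and $\mathrm{Pf}\, N_i(I_{2n})=\pm 1$ for every $i$. The zero set of $p$ is therefore a proper real-algebraic subvariety of $M_{2n}(\mathbb R)$ with empty interior, and its complement intersected with the open dense set $\bar M_{2n}(\mathbb R)$ yields the open dense set $\mathfrak V_{2n}$. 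The only real subtlety is the bookkeeping identification of $N_i(V)$ with the leading principal submatrix $F(V)^{(i)}$; once that is in hand, neither the submersion property of $F$ from Lemma \ref{J-Grammian function} nor any further differential-topological input is required for this particular proposition (though an alternative route would apply the same Pfaffian argument to show $\tilde A_{2n}(\mathbb R)$ is open and dense in $\bar A_{2n}(\mathbb R)$, and then transfer openness and density through the open map $F$).
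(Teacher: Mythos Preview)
Your argument is correct and considerably more direct than the paper's. The identification $N_i(V)=F(V)^{(i)}$ is exactly right, and once you observe that each $\mathrm{Pf}\,N_i$ (or equivalently $\det N_i$) is a polynomial in the entries of $V$ that is nonzero at $V=I_{2n}$, both openness and density follow immediately from the standard fact that the zero locus of a nontrivial real polynomial has empty interior.

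The paper takes a different and heavier route for density: it perturbs the image $X=F(V)$ inside $\bar A_{2n}(\mathbb R)$, using the orthogonal block-diagonalisation of skew-symmetric matrices to repair each degenerate principal minor one at a time, and then invokes Lemma~\ref{J-Grammian function} (that $F$ is a surjective submersion) together with the Local Submersion Theorem to pull the perturbed $\tilde X$ back to a nearby $\tilde V$. Your approach bypasses Lemma~\ref{J-Grammian function} and all of the differential-topological machinery entirely, working directly in the $V$-space. The paper's approach is more structural---it separates the genericity statement about $\tilde A_{2n}(\mathbb R)$ inside $\bar A_{2n}(\mathbb R)$ from the properties of the map $F$---but for the purposes of this particular proposition, your polynomial argument is both shorter and more elementary, and in fact renders Lemma~\ref{J-Grammian function} unnecessary for the proof of Theorem~\ref{thm:admissibility-is-generic}.
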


We have to show that $\tilde{M}_{2n,s}(\mathbb{R})$ is an open and dense (generic) subset of $M_{2n}(\mathbb{R})$. First, we show that $\tilde{M}_{2n,s}(\mathbb{R})$ is an open set. Consider a matrix $A \in \tilde{M}_{2n,s}(\mathbb{R}) \subset \bar{M}_{2n,s}(\mathbb{R})$, and let $A=V J_A V^{-1}$. The block-diagonal real Jordan canonical form $J_A$ of $A$, has the $1 \times 1$ real blocks before the $2 \times 2$ complex blocks, and no real Jordan blocks of dimension higher than two. Also, $V \in \mathfrak{V}_{2n}$. Applying Lemma \ref{Eigenstructure deformation} to $A$, we conclude that there is a neighborhood $N^{\prime}(A)$ such that for every $\tilde{A} \in N^{\prime}(A)$, $\tilde{A}=\tilde{V} J_{\tilde{A}} \tilde{V}^{-1}$, with $\tilde{V}$ close to $V$, and $J_{\tilde{A}}$ close to $J_A$, and with the same block structure. Let $E=\tilde{V}-V$. Then, $\tilde{V}^{\top} \mathbb{J}_n \tilde{V}=(V+E)^{\top} \mathbb{J}_n (V+E)=V^{\top} \mathbb{J}_n V + E^{\top} \mathbb{J}_n V + V^{\top} \mathbb{J}_n E + E^{\top} \mathbb{J}_n E$. Hence, $\det \big( \tilde{V}^{\top} \mathbb{J}_n \tilde{V} \big)^{(i)}= P^{(i)} (E)$, a multivariate polynomial of degree $4i$ in the entries $E_{jk}$ of $E=\tilde{V}-V$, whose constant term is $\det \big( V^{\top} \mathbb{J}_n V \big)^{(i)}$. However, for matrices $V \in \mathfrak{V}_{2n}$, $\det \big( V^{\top} \mathbb{J}_n V \big)^{(i)} \neq 0, i=1,\ldots,n-1$. By continuity, there exists $\varepsilon_0 >0$, such that $\det \big( \tilde{V}^{\top} \mathbb{J}_n \tilde{V} \big)^{(i)}\neq 0$, for any $E \in M_{2n}(\mathbb{R})$ with $\max_{jk}|E_{jk}|<\varepsilon_0$. By shrinking the neighborhood of $A$ if necessary, we can satisfy $\max_{jk}|\tilde{V}_{jk}-V_{jk}|<\varepsilon_0$, and hence $\tilde{V} \in \mathfrak{V}_{2n}$ and $\tilde{A} \in \tilde{M}_{2n}(\mathbb{R})$. This proves that $\tilde{M}_{2n,s}(\mathbb{R})$ is an open subset of $\bar{M}_{2n,s}(\mathbb{R})$.

To prove that $\tilde{M}_{2n,s}(\mathbb{R})$ is a dense subset of $M_{2n}(\mathbb{R})$, we must prove that every $A \in M_{2n}(\mathbb{R})$ has a $\tilde{A} \in \tilde{M}_{2n,s}(\mathbb{R})$ arbitrarily close to it. Since $\bar{M}_{2n,s}(\mathbb{R})$ is a dense subset of $M_{2n}(\mathbb{R})$, there exists a $\bar{A} \in \bar{M}_{2n,s}(\mathbb{R})$ arbitrarily close to $A$. Let $\bar{A}=\bar{V} J_{\bar{A}} \bar{V}^{-1}$. Then, $J_{\bar{A}}$ is a block-diagonal real Jordan canonical form with no Jordan blocks of dimension higher than two, and can be structured so that it has the $1 \times 1$ real blocks before the $2 \times 2$ complex blocks. Also, $\bar{V} \in \bar{M}_{2n}(\mathbb{R})$. If $\bar{V}$ is not in $\mathfrak{V}_{2n}$, we know from Proposition \ref{Generic Property} that we can find a $\tilde{V}$ arbitrarily close to $\bar{V}$ that is in $\mathfrak{V}_{2n}$. Then, $\tilde{A}=\tilde{V} J_{\bar{A}} \tilde{V}^{-1} \in \tilde{M}_{2n,s}(\mathbb{R})$, and is arbitrarily close to $A$. Hence, $\tilde{M}_{2n,s}(\mathbb{R})$ is a dense subset of $\bar{M}_{2n,s}(\mathbb{R})$, and the theorem is proven. \hfill $\Box$%

\textbf{Proof of Lemma \ref{Eigenstructure deformation}:} Theorem 5.16 of \cite[Chapter II]{Kato95} states that, for a simple matrix in $\mathbb{C}^{N \times N}$, there is a neighborhood of matrices in $\mathbb{C}^{N \times N}$ that contains it, such that the eigenvalues of every matrix in this neighborhood are holomorphic functions of the matrix entries. Furthermore, in the proof of this theorem, it is shown that the eigenspaces of matrices in this neighborhood are also holomorphic functions of their entries. Specializing these results to real matrices, we have that for a simple matrix in $\mathbb{R}^{N \times N}$, there is a neighborhood of matrices in $\mathbb{R}^{N \times N}$ that contains it, such that the eigenvalues and eigenspaces (with eigenspaces being represented by projection operators onto the respective eigenspaces) of every matrix in this neighborhood are analytic functions of its entries. Let $T \in \mathbb{R}^{N \times N}$ be a simple matrix with nonzero eigenvalues, and $T=U J U^{-1}$ a decomposition of it in a Jordan form ($J$ is diagonal with distinct entries). Let also $\tilde{T} \in \mathbb{R}^{N \times N}$ be a matrix in the neighborhood $N(T)$ of $T$ with the aforementioned properties. Taking the entries of $\tilde{T}$ arbitrarily close to those of $T$, the eigenspaces of the two matrices can be made arbitrarily close, as well. Hence, we may change the chosen eigenvectors of $T$ (columns of $U$) to form eigenvectors of $\tilde{T}$ in a continuous way. Then, we may write $\tilde{T}=\tilde{U} \tilde{J} \tilde{U}^{-1}$, where $\tilde{U}$ is arbitrarily close to $U$. Similarly, the diagonal matrix $\tilde{J}$ of eigenvalues of $\tilde{T}$ will be arbitrarily close to $J$. Due to the property of real matrices to have complex eigenvalues in conjugate pairs, and the fact that that $T$ has no zero eigenvalues, there is a neighborhood $N^{\prime}(T) \subseteq N(T)$ such that every $\tilde{T} \in N^{\prime}(T)$ has eigenvalues not only close, but of the same type as $T$. The reason is that, for a pair of complex conjugate eigenvalues to be created (destroyed), two distinct real nonzero eigenvalues must coalesce to a double eigenvalue (be produced by the separation of two equal real eigenvalues). This, however, can be prevented by shrinking the neighborhood of $T$ as much as necessary. \hfill $\Box$%

\textbf{Proof of Lemma \ref{J-Grammian function}:} First, we show that $F$ is properly defined. Obviously, $F(V)$ is antisymmetric. For a $V \in \bar{M}_{2n}(\mathbb{R})$, $\det V \neq 0$, so $\det F(V)=(\det V)^2 \det \mathbb{J}_n = (\det V)^2 \neq 0$, and hence $F(V) \in \bar{A}_{2n}(\mathbb{R})$. Next, we show that $F$ is onto. Let $X \in \bar{A}_{2n}(\mathbb{R})$. From \cite[Subsection 2.5.14]{HJ85}, we know that there exists a $2n \times 2n$ orthogonal matrix $Q$, and a $2n \times 2n$ block-diagonal matrix $\Lambda$ of the form
\begin{eqnarray*}
\Lambda=\diag\left( \left[\begin{array}{cc} 0 & \lambda_1 \\ -\lambda_1 & 0 \\ \end{array}\right], \ldots,
\left[\begin{array}{rr} 0 & \lambda_n \\ -\lambda_n & 0 \\ \end{array}\right]\right),
\end{eqnarray*}
with $\lambda_i \geq 0, i=1,\ldots,n$, such that $X=Q^{\top} \Lambda Q$. Furthermore, the eigenvalues of $X$ are $\pm\imath\lambda_1,\ldots$, $\pm\imath\lambda_n$, and hence $\lambda_i > 0$, for a full rank $X$. It is easy to see that $\Lambda=D^{\top} \mathbb{J}_n D$, with $D=\diag(\sqrt{\lambda_1},\sqrt{\lambda_1},\ldots,\sqrt{\lambda_n},\sqrt{\lambda_n})$. Then, $X=V^{\top} \mathbb{J}_n V$, for $V=DQ$, and $V$ is full rank because $Q$ and $D$ are. Hence, $F$ is onto.

Continuity and differentiability follow from the fact that the entries of $F(V)$ are second order multivariate polynomials in the entries of $V$. Finally, we show that $F$ is a submersion, i.e. that its derivative $DF_{V}: T_{V}\bar{M}_{2n}(\mathbb{R}) \rightarrow T_{F(V)} \bar{A}_{2n}(\mathbb{R})$ is a surjective linear map from the tangent space $T_{V} \bar{M}_{2n}(\mathbb{R})$ of $\bar{M}_{2n}(\mathbb{R})$ at $V$, to the tangent space $T_{F(V)} \bar{A}_{2n}(\mathbb{R})$ of $\bar{A}_{2n}(\mathbb{R})$ at $F(V)$, see \cite[Chapter 1]{GP74} for terminology and notation. Starting from $F(V)=V^{\top} \mathbb{J}_n V$ and ``taking differentials'' of both sides, we have that $dF=(dV)^{\top} \mathbb{J}_n V + V^{\top} \mathbb{J}_n dV$. Hence, for a tangent vector $v \in T_{V}\bar{M}_{2n}(\mathbb{R})$ (infinitesimal variation $dV$ at $V$), we have $DF_{V}(v)=v^{\top} \mathbb{J}_n V + V^{\top} \mathbb{J}_n v$. The tangent space of $\bar{M}_{2n}(\mathbb{R})$ at any point $V$ is isomorphic to $M_{2n}(\mathbb{R})$, and the tangent space of $\bar{A}_{2n}(\mathbb{R})$ at any point $X$ is isomorphic to $A_{2n}(\mathbb{R})$. Hence, $v \in M_{2n}(\mathbb{R})$, and $DF_{V}(v) \in A_{2n}(\mathbb{R})$. Let $w$ be a tangent vector in $T_{F(V)} \bar{A}_{2n}(\mathbb{R})$. To show that $DF_{V}: T_{V}\bar{M}_{2n}(\mathbb{R}) \rightarrow T_{F(V)} \bar{A}_{2n}(\mathbb{R})$ is surjective, we must show that for any such $w$, there exists at least one $v \in T_{V}\bar{M}_{2n}(\mathbb{R})$, such that $DF_{V}(v)=w$. This is equivalent to the equation $v^{\top} \mathbb{J}_n V + V^{\top} \mathbb{J}_n v=w$ having a solution $v \in M_{2n}(\mathbb{R})$ given any $w \in A_{2n}(\mathbb{R})$. Let $v=\mathbb{J}_n (V^{\top})^{-1}\bar{v}$ in that equation (recall that for any $V \in\bar{M}_{2n}(\mathbb{R})$, $V$ is invertible). Then, it reduces to $-\bar{v}+\bar{v}^{\top}=w$, where the antisymmetry of $\mathbb{J}_n$, and the identity $\mathbb{J}_n^2= -I_{2n}$ were used. The general solution of this equation is $\bar{v}=u-\frac{1}{2}w$, where $u$ is any $2n \times 2n$ symmetric matrix. It is to be expected that the solution for $v$ (equivalently for $\bar{v}$) is not unique, since $T_{V} \bar{M}_{2n}(\mathbb{R})$ is a higher dimensional space from $T_{F(V)} \bar{A}_{2n}(\mathbb{R})$. As a matter of fact, the general solution for $v$, $v=\mathbb{J}_n (V^{\top})^{-1}\bar{v}=\mathbb{J}_n (V^{\top})^{-1}\big(u-\frac{1}{2}w\big)$ is parameterized by a $2n \times 2n$ symmetric matrix $u$. The set of such matrices is a linear space of dimension $\frac{1}{2}2n(2n+1)=2n^2+n$, and its dimension is exactly the difference of dimensions of $T_{V}\bar{M}_{2n}(\mathbb{R})$ ($4n^2$) and $T_{F(V)} \bar{A}_{2n}(\mathbb{R})$ ($2n^2 -n$). Hence, we proved that $DF_{V}$ is surjective for every $V \in \bar{M}_{2n}(\mathbb{R})$, i.e. $F$ is a (local) submersion. \hfill $\Box$

\textbf{Proof of Proposition \ref{Generic Property}:} First, we show that the set of $V \in \bar{M}_{2n}(\mathbb{R})$ such that $F(V)=V^{\top} \mathbb{J}_n V \in \tilde{A}_{2n}(\mathbb{R})$, is open in $\bar{M}_{2n}(\mathbb{R})$. Consider such a $V$. Then, $\det F(V)^{(i)}\neq 0, i=1,\ldots,n$. Let $E \in M_{2n}(\mathbb{R})$, and consider $\det F(V+E)^{(i)}$. Since $F(V+E)=(V+E)^{\top} \mathbb{J}_n (V+E)=V^{\top} \mathbb{J}_n V + E^{\top} \mathbb{J}_n V + V^{\top} \mathbb{J}_n E + E^{\top} \mathbb{J}_n E$, we can see that $\det F(V+E)^{(i)}= P^{(i)} (E)$, a multivariate polynomial of degree $4i$ in the entries $E_{jk}$ of $E$, whose constant term is $\det F(V)^{(i)}$. By continuity, there exists $\varepsilon_0 >0$, such that $\det F(V+E)^{(i)}\neq 0$, for any $E \in M_{2n}(\mathbb{R})$ with $|E_{jk}|<\varepsilon_0, j,k=1,2,\ldots,2n$. This proves that, the set of $V \in \bar{M}_{2n}(\mathbb{R})$ such that $F(V)=V^{\top} \mathbb{J}_n V \in \tilde{A}_{2n}(\mathbb{R})$, is open in $\bar{M}_{2n}(\mathbb{R})$.

Now we shall prove that it is dense as well. It suffices to show that, for $V \in \bar{M}_{2n}(\mathbb{R})$ such that $F(V)=V^{\top} \mathbb{J}_n V \in \big( \bar{A}_{2n}(\mathbb{R}) \backslash \tilde{A}_{2n}(\mathbb{R})\big)$, there exists a $\tilde{V} \in \bar{M}_{2n}(\mathbb{R})$ arbitrarily close to $V$ such that $F(\tilde{V})=\tilde{V}^{\top} \mathbb{J}_n \tilde{V} \in \tilde{A}_{2n}(\mathbb{R})$. Since $X=F(V)=V^{\top} \mathbb{J}_n V \in \big(\bar{A}_{2n}(\mathbb{R}) \backslash \tilde{A}_{2n}(\mathbb{R})\big)$, there exists at least one $1 \leq r \leq n-1$, such that $\det X^{(r)}=\det F(V)^{(r)}=0$. Since $X^{(r)}$ is a $2r \times 2r$ real skew-symmetric matrix, there exists an $2r \times 2r$ orthogonal matrix $Q$ such that $X^{(r)} = Q^{\top} S Q$, with
\begin{eqnarray*}
S=\diag\left( \left[\begin{array}{cc} 0 & \nu_1 \\ -\nu_1 & 0 \\ \end{array}\right], \ldots,
\left[\begin{array}{rr} 0 & \nu_r \\ -\nu_r & 0 \\ \end{array}\right]\right),
\end{eqnarray*}
and $\nu_i \geq 0, i=1,\ldots,r$. Then, $\det X^{(r)} = (\det Q)^2 \det S= (\nu_1 \, \ldots \, \nu_r)^2$. Since $\det X^{(r)}=0$, this implies that at least one of the $\nu$'s must be equal to zero. Without loss of generality, we may assume that the first $q$ are equal to zero, $1 \leq q \leq r$. Let $\tilde{S}$ be given by the expression above, where the zero $\nu$'s have been replaced by nonzero $\varepsilon_1,\ldots,\varepsilon_q$:
\begin{eqnarray*}
\tilde{S}&=&\diag\left( \left[\begin{array}{cc} 0 & \varepsilon_1 \\ -\varepsilon_1 & 0 \\ \end{array}\right], \ldots,
\left[\begin{array}{cc} 0 & \varepsilon_q \\ -\varepsilon_q & 0 \\ \end{array}\right],
 \left[\begin{array}{cc} 0 & \nu_{q+1} \\ -\nu_{q+1} & 0 \\ \end{array}\right], \right.\\
&&\quad \left. \ldots,
\left[\begin{array}{cc} 0 & \nu_{r} \\ -\nu_{r} & 0 \\ \end{array}\right] \right).
\end{eqnarray*}
Let also,
\begin{eqnarray*}
\left(\begin{array}{cc}
S & W \\
-W^{\top} & U \\
\end{array}\right) =
\left(\begin{array}{cc}
Q & \mathbf{0} \\
\mathbf{0} & I_{2(n-r)} \\
\end{array}\right) X
\left(\begin{array}{cc}
Q^{\top} & \mathbf{0} \\
\mathbf{0} & I_{2(n-r)} \\
\end{array}\right),
\end{eqnarray*}
and
\begin{eqnarray*}
\tilde{X} = \left(\begin{array}{cc}
Q^{\top} & \mathbf{0} \\
\mathbf{0} & I_{2(n-r)} \\
\end{array}\right)
\left(\begin{array}{cc}
\tilde{S} & W \\
-W^{\top} & U \\
\end{array}\right)
\left(\begin{array}{cc}
Q & \mathbf{0} \\
\mathbf{0} & I_{2(n-r)} \\
\end{array}\right).
\end{eqnarray*}
It is obvious that $\tilde{X}$ can be arbitrarily close to $X$, for small enough $\varepsilon_1,\ldots,\varepsilon_q$, and that $\det \tilde{X}^{(r)}=\det \tilde{S}=(\varepsilon_1 \, \ldots \, \varepsilon_q \, \nu_{q+1} \, \ldots \, \nu_r)^2 \neq 0$. We can also see that $\det \tilde{X}^{(i)} = \tilde{P}^{(i)}(\varepsilon_1,\ldots,\varepsilon_q)$, for $i=1,\ldots,n$, where $\tilde{P}^{(i)}(\varepsilon_1,\ldots,\varepsilon_q)$ is a multivariate polynomial of degree at most $2i$ in the variables $\varepsilon_1,\ldots,\varepsilon_q$, with constant term equal to $\det X^{(i)}$. Hence, for small enough $\varepsilon_1,\ldots,\varepsilon_q$, all the determinants $\det X^{(i)} \neq 0$ remain so for $\tilde{X}$. So, by slightly changing $X$ to $\tilde{X}$, we increased the number of principal submatrices $\tilde{X}^{(i)}$ of full rank by (at least) one, compared with those of $X$. If, for some principal submatrices of $\tilde{X}^{(i)}$ (such that $\det X^{(i)}=0$), we still have $\det \tilde{X}^{(i)} =0$, we may apply the same procedure sequentially, and end up with a matrix $\tilde{X} \in \tilde{A}_{2n}(\mathbb{R})$ arbitrarily close to $X$. From Proposition \ref{J-Grammian function}, $F$ is globally onto. This guarantees that there exists $\tilde{V} \in \bar{M}_{2n}(\mathbb{R})$ such that $\tilde{V}^{\top} \mathbb{J}_n \tilde{V}=F(\tilde{V})=\tilde{X}$. Moreover, $F$ is a submersion at $V$. The Local Submersion Theorem \cite[Section 1.4]{GP74}, guarantees that a neighborhood of $X=F(V)$ (in which we may assume that $\tilde{X}$ belongs to, because we may construct $\tilde{X}$ to be arbitrarily close to to $X$) is the image under $F$ of a neighborhood of $V$. Then, there exists $\tilde{V} \in \bar{M}_{2n}(\mathbb{R})$ in said neighborhood of $V$, such that $\tilde{V}^{\top} \mathbb{J}_n \tilde{V}=F(\tilde{V})=\tilde{X} \in \tilde{A}_{2n}(\mathbb{R})$. Hence, the set of $V \in \bar{M}_{2n}(\mathbb{R})$ such that $F(V)=V^{\top} \mathbb{J}_n V \in \tilde{A}_{2n}(\mathbb{R})$, is also dense in $\bar{M}_{2n}(\mathbb{R})$, and the proposition is proven. \hfill $\Box$

\bibliographystyle{IEEEtran}
\bibliography{ieeeabrv,rip,mjbib2004,irpnew}

\end{document}